\newcommand{\A}{\mathbf{A}} 
\newcommand{\ba}{\mathbf{a}} 
\newcommand{\tA}{\tilde{\mathbf{A}}} 
\newcommand{\x}{\mathbf{x}} 
\newcommand{\bb}{\mathbf{b}}
\newcommand{\y}{\mathbf{y}}
\newcommand{\p}{\mathbf{p}}
\newcommand{\w}{\mathbf{w}}
\newcommand{\s}{\mathbf{S}}
\newcommand{\z}{\mathbf{z}}
\newcommand{\I}{\mathbf{I}}
\newcommand{\X}{\mathbf{X}}
\newcommand{\R}{\mathbb{R}}
\newcommand{\h}{\mathbf{H}}
\newcommand{\hh}{\hat{\mathbf{H}}}
\newcommand{\ddh}{\hat{\mathbf{H}}^{\dagger}}
\newcommand{\hd}{\mathbf{H}^{\dagger}}
\newcommand{\Del}{\mathbf{\Delta}}
\newcommand{\nn}{\nonumber}
\newcommand{\U}{\mathbf{U}}
\newcommand{\V}{\mathbf{V}}
\newcommand{\hU}{\hat{\mathbf{U}}} 
\newcommand{\g}{\mathbf{g}}
\newcommand{\W}{\mathbf{W}}
\newcommand{\Z}{\mathbf{Z}}
\newtheorem{theorem}{Theorem}[section]
\newtheorem{lemma}[theorem]{Lemma}
\newtheorem{remark}{Remark}
\title{OverSketched Newton: Fast Convex Optimization for Serverless Systems}
\author[1]{Vipul Gupta}
\author[1]{Swanand Kadhe}
\author[1]{Thomas Courtade}
\author[2]{Michael W. Mahoney}
\author[1]{Kannan Ramchandran}
\affil[1]{Department of EECS, University of California, Berkeley}
\affil[2]{ICSI and Statistics Department, University of California, Berkeley}
\date{\vspace{-5ex}}
\begin{document}

\maketitle

\begin{abstract}
Motivated by recent developments in serverless systems for large-scale computation as well as improvements in scalable randomized matrix algorithms, we develop OverSketched Newton, a randomized Hessian-based optimization algorithm to solve large-scale convex optimization problems in serverless systems. 
OverSketched Newton leverages matrix sketching ideas from Randomized Numerical Linear Algebra to compute the Hessian approximately. 
These sketching methods lead to inbuilt resiliency against stragglers that are a characteristic of serverless architectures.
Depending on whether the problem is strongly convex or not, we propose different iteration updates using the approximate Hessian.
For both cases, we establish convergence guarantees for OverSketched Newton and  empirically validate our results by solving large-scale supervised learning problems on real-world datasets. 
Experiments demonstrate a reduction of ${\sim}50\%$ in total running time on AWS Lambda, compared to state-of-the-art distributed optimization schemes.
\end{abstract}


\section{Introduction}
\label{intro}


In recent years, there has been tremendous growth in users performing distributed computing operations on the cloud, largely due to extensive and inexpensive commercial offerings like Amazon Web Services (AWS), Google Cloud, Microsoft Azure, etc. Serverless platforms---such as AWS Lambda, Cloud functions and Azure Functions---penetrate a large user base by provisioning and managing the servers on which the computation is performed.
These platforms abstract away the need for maintaining servers, since this is done by the cloud provider and is hidden from the user---hence the name \emph{serverless}. 
Moreover, allocation of these servers is done expeditiously which provides greater elasticity and easy scalability. For example, up to ten thousand machines can be allocated on AWS Lambda in less than ten seconds \cite{serverless_computing, serverless2, pywren, numpywren}. 

The use of serverless systems is gaining significant research traction, primarily due to its massive scalability and convenience in operation. It is forecasted that the market share of serverless will grow by USD 9.16 billion during 2019-2023 (at a CAGR of 11\%) \cite{serverless_technavio}.
Indeed, according to the \textit{Berkeley view on Serverless Computing} \cite{berkeley_view},  serverless systems are expected to dominate the cloud scenario and become the default computing paradigm in the coming years
while client-server based computing will witness a considerable decline.
For these reasons, using serverless systems for large-scale computation has garnered significant attention from the systems community \cite{pywren,numpywren, deep_serverless,ibm_serverless, kth_serverless,serverless_ml,dnn_training_serverless,cirrus}.

\begin{figure}[t!]
        \centering
        \includegraphics[scale=0.375]{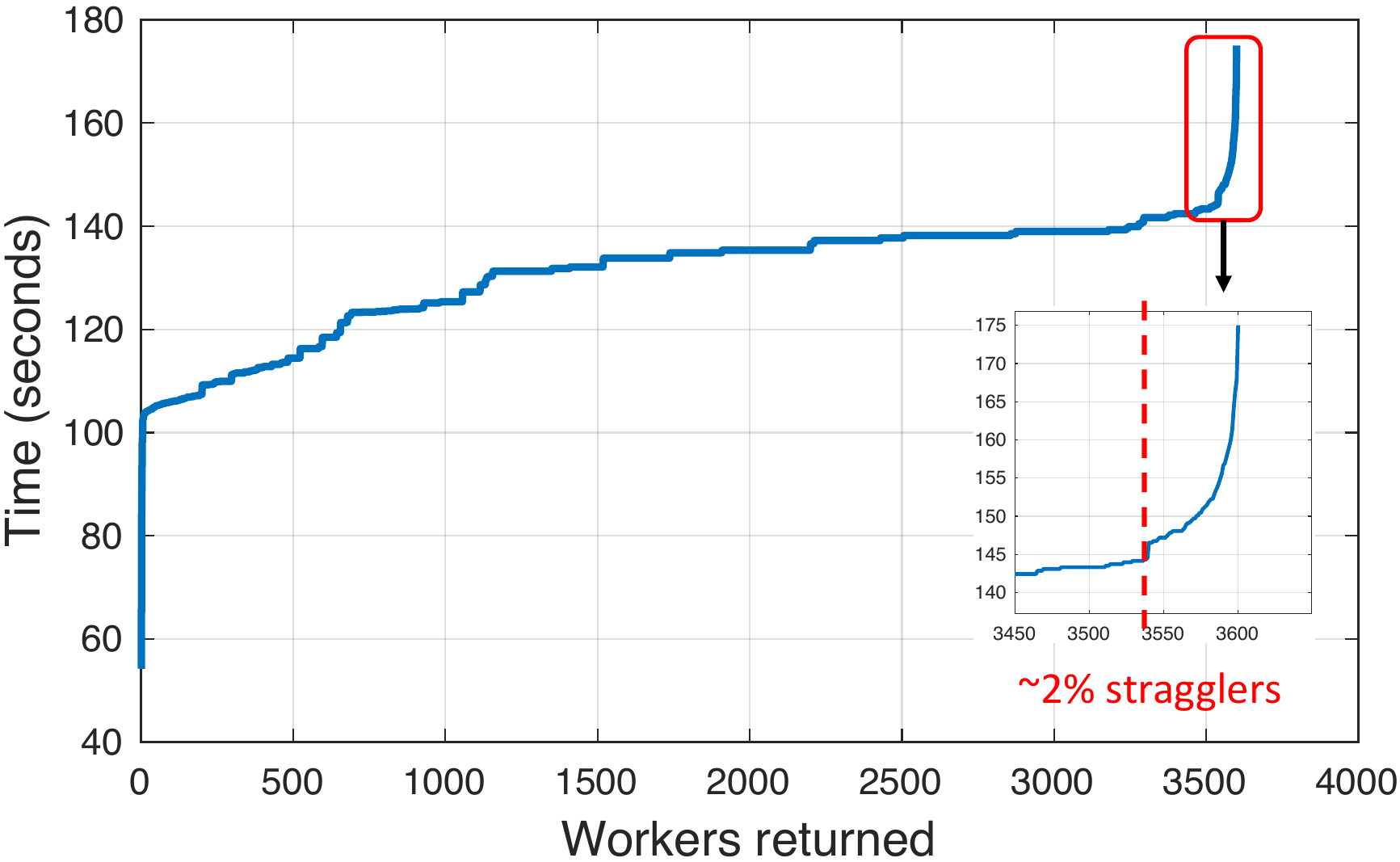}
        \caption{\small Average job times for 3600 AWS Lambda nodes over 10 trials for distributed matrix multiplication. The median job time is around $135$ seconds, and around $2\%$ of the nodes take up to $180$ seconds on average.}
    \label{fig:stragglers}
\end{figure}

Due to several crucial differences between the traditional High Performance Computing (HPC) / serverful and serverless architectures, existing distributed algorithms cannot, in general, be extended to serverless computing. 
First, unlike \emph{serverful} computing, the number of inexpensive workers in serverless platforms is flexible, often scaling into the thousands \cite{pywren,numpywren}. 
This heavy gain in the computation power, however, comes with the disadvantage that 
the commodity workers in serverless architecture are ephemeral and have low memory.%
\footnote{For example, serverless nodes in AWS Lambda, Google Cloud Functions and Microsoft Azure Functions  have a maximum memory of 3 GB, 2 GB and 1.5 GB, respectively, and a maximum runtime of 900 seconds, 540 seconds and 300 seconds, respectively (these numbers may change over time). }
The ephemeral nature of the workers in serverless systems requires that new workers should be invoked every few iterations and data should be communicated to them.
Moreover, the workers do not communicate amongst themselves, and instead they read/write data directly from/to a single high-latency data storage entity (e.g., cloud storage like AWS S3 \cite{pywren}). 

Second, unlike HPC/serverful systems, nodes in the serverless systems suffer degradation due to what is known as {\it system noise}.
This can be a result of limited availability of shared resources, hardware failure, network latency, etc.~\cite{tailatscale, hoefler}. 
This results in job time variability, and hence a subset of much slower nodes, often called \emph{stragglers}.
These stragglers significantly slow the overall computation time, especially in large or iterative jobs. 
In Fig. 1, we plot the running times for a distributed matrix multiplication job with 3600 workers on AWS Lambda and demonstrate the effect of stragglers on the total job time.
In fact, our experiments consistently demonstrate that at least $2\%$ workers take significantly longer than the median job time, severely degrading the overall efficiency of the system.

Due to these 
issues, first-order methods, e.g., gradient descent and Nesterov Accelerated Gradient (NAG) methods, tend to perform poorly on distributed serverless architectures \cite{hellerstein}. 
Their slower convergence is made worse on serverless platforms due to persistent stragglers. 
The straggler effect incurs heavy slowdown due to the accumulation of tail times as a result of a subset of slow workers occurring in each iteration. 

Compared to first-order optimization algorithms, second-order methods---which use the gradient as well as Hessian information---enjoy superior convergence rates.  
For instance, Newton's method enjoys quadratic convergence for strongly convex and smooth problems, compared to the linear convergence of gradient descent \cite{nocedal-wright}.
Moreover, second-order methods do not require step-size tuning and unit step-size provably works for most problems.  
These methods have a long history in optimization and scientific computing (see, e.g., \cite{nocedal-wright}), but they are less common in machine learning and data science.
This is partly since stochastic first order methods suffice for downstream problems \cite{bottou-nocedal-optimization-ml} and partly since naive implementations of second order methods can perform poorly \cite{whitening-second-order}.
However, recent theoretical work has addressed many of these issues \cite{fred1,fred2,Fred:non-convex,newton_mr,mert}, and recent implementations have shown that high-quality implementations of second order stochastic optimization algorithms can beat state-of-the-art in machine learning applications~\cite{giant,FangGPU,pyhessian,adahessian,second-order-made-practical} in traditional systems.


\subsection{Main Contributions}

In this paper, we argue that second-order methods are highly compatible with serverless systems that provide extensive computing power by invoking thousands of workers but are limited by the communication costs and hence the number of iterations; and, to address the challenges of ephemeral workers and stragglers in serverless systems, we propose and analyze a randomized and distributed second-order optimization algorithm, called {\it OverSketched Newton}.
OverSketched Newton uses the technique of matrix sketching from Sub-Sampled Newton (SSN) methods~\cite{fred1,fred2,Fred:non-convex,newton_mr}, which are based on sketching methods from Randomized Numerical Linear Algebra (RandNLA)~\cite{woodruff_now,Mah-mat-rev_BOOK,gittens-big-data}, to obtain a good approximation for the Hessian, instead of calculating the full Hessian. 

OverSketched Newton has two key components. 
For straggler-resilient Hessian calculation in serverless systems, we use the sparse sketching based randomized matrix multiplication method from~\cite{oversketch}. 
For straggler mitigation during gradient calculation, we use the recently proposed technique based on error-correcting codes to create redundant computation \cite{kangwook1,tavor,local_codes}.
We prove that, for strongly convex functions, the local convergence rate of OverSketched Newton is linear-quadratic, while its global convergence rate is linear. 
Then, going beyond the usual strong convexity assumption for second-order methods, we adapt OverSketched Newton using ideas from \cite{newton_mr}. 
For such functions, we prove that a linear convergence rate can be guaranteed with OverSketched~Newton. To the best of our knowledge, this is the first work to prove convergence guarantees for weakly-convex problems when the Hessian is computed approximately using ideas from RandNLA.


We extensively evaluate OverSketched Newton on AWS Lambda using several real-world datasets obtained from the LIBSVM repository \cite{libsvm}, and we compare OverSketched Newton with several first-order (gradient descent, Nesterov's method, etc.) and second-order (exact Newton's method \cite{nocedal-wright}, GIANT \cite{giant}, etc.) baselines for distributed optimization. 
We further evaluate and compare different techniques for straggler mitigation, such as speculative execution, coded computing \cite{kangwook1,tavor}, randomization-based sketching \cite{oversketch} and gradient coding \cite{grad_coding}. 
We demonstrate that
OverSketched Newton is at least 9x and 2x faster than state-of-the-art first-order and second-order schemes, respectively, in terms of end-to-end training time on AWS Lambda. 
Moreover, we show that OverSketched Newton on serverless systems outperforms existing distributed optimization algorithms in serverful systems by at least $30\%$.

\subsection{Related Work}

Our results tie together three quite different lines of work, each of which we review here briefly.

{\bf Existing Straggler Mitigation Schemes:}
Strategies like speculative execution have been traditionally used to mitigate stragglers in popular distributed computing frameworks like Hadoop MapReduce \cite{mapreduce} and Apache Spark \cite{spark}. Speculative execution works by detecting workers that are running slower than expected and then allocating their tasks to new workers without shutting down the original straggling task. The worker that finishes first communicates its results. 
This has several drawbacks, e.g. constant monitoring of tasks is required and late stragglers can still hurt the efficiency.


Recently, many coding-theoretic ideas have been proposed to introduce redundancy into the distributed computation for straggler mitigation
(e.g. see \cite{kangwook1, tavor, grad_coding, local_codes,poly_codes,grover_inverse}). 
The idea of coded computation is to generate redundant copies of the result of distributed computation by encoding the input data using error-correcting-codes. These redundant copies are then used to decode the output of the missing stragglers. Our algorithm to compute gradients in a distributed straggler-resilient manner uses codes to mitigate stragglers, and we compare our performance with speculative~execution.




{\bf Approximate Second-order Methods:} 
In many machine learning applications, where the data itself is noisy, using the exact Hessian is not necessary. 
Indeed, 
using ideas from 
RandNLA, 
one can prove convergence guarantees for SSN methods on a single machine, when the Hessian is computed approximately \cite{mert, fred1, fred2,Fred:non-convex}. 
To accomplish this, many sketching schemes can be used (sub-Gaussian, Hadamard, random row sampling, sparse Johnson-Lindenstrauss, etc. \cite{woodruff_now, Mah-mat-rev_BOOK}), but these methods cannot tolerate stragglers, and thus they do not perform well in serverless environments.

This motivates the use of the \emph{OverSketch} sketch from our recent work in~\cite{oversketch}. 
OverSketch has many nice properties, like subspace embedding, sparsity, input obliviousness, and amenability to distributed implementation. 
To the best of our knowledge, this is the first work to prove and evaluate convergence guarantees for algorithms based on OverSketch. 
Our guarantees take into account the  amount of communication at each worker and the number of stragglers, both of which are a property of distributed systems. 


There has also been a growing research interest in designing and analyzing distributed implementations of stochastic second-order methods~ \cite{dane,disco,aide,giant, jaggi18,cocoa}. 
However, these implementations are tailored for serverful distributed systems. 
Our focus, on the other hand, is on serverless systems. 

{\bf Distributed Optimization on Serverless Systems:} 
Optimization over the serverless framework has garnered significant interest from the research community. However, these works either evaluate and benchmark existing algorithms (e.g., see \cite{kth_serverless,serverless_ml,dnn_training_serverless}) or focus on designing new systems frameworks for faster optimization (e.g., see \cite{cirrus}) on serverless.
To the best of our knowledge, this is the first work that proposes a large-scale distributed optimization algorithm that specifically caters to \emph{serverless architectures} with \emph{provable convergence guarantees}. We exploit the advantages offered by serverless systems while mitigating the drawbacks such as stragglers and additional overhead per invocation of workers.

\section{Newton's Method: An Overview}

We are interested in solving on serverless systems in a distributed and straggler-resilient manner problems of the~form:
\begin{equation}\label{min_f}
f(\w^*) = \min_{\w \in \R^d} f(\w),
\end{equation}
where $f: \mathbb{R}^d \rightarrow \mathbb{R} $ is a closed and convex function bounded from below. 
In the Newton's method, the update at the $(t+1)$-th iteration is obtained by minimizing the Taylor's expansion of the objective function $f(\cdot)$ at $\w_t$, that is
\begin{align}\label{newton_update}
\w_{t+1} = \arg\min_{\w\in \R^d} \Big\{f(\w_t) + \nabla f(\w_t)^T(\w - \w_t) 
\nn \\ 
+ \frac{1}{2}(\w - \w_t)^T\nabla^2f(\w_t)(\w - \w_t)\Big\}.
\end{align}
For strongly convex $f(\cdot)$, that is, when $\nabla^2f(\cdot)$ is invertible, Eq. \eqref{newton_update} becomes
$\w_{t+1} = \w_t - \h_t^{-1}\nabla f(\w_t),$
where $\h_t = \nabla^2f(\w_t)$ is the Hessian matrix at the $t$-th iteration. 
Given a good initialization and assuming that the Hessian is Lipschitz, the Newton's method satisfies the update 
$||\w_{t+1} - \w^*||_2 \leq c||\w_t - \w^*||^2_2,$
for some constant $c>0$, implying quadratic convergence \cite{nocedal-wright}. 

One shortcoming for the classical Newton's method is that it works only for strongly convex objective functions. In particular,  if $f$ is weakly-convex\footnote{For the sake of clarity, we call a convex function weakly-convex if it is not strongly convex.}, that is, if the Hessian matrix is not positive definite, then the objective function in~\eqref{newton_update} may be unbounded from below. To address this shortcoming, authors in~\cite{newton_mr} recently proposed a variant of  Newton's method, called Newton-Minimum-Residual (Newton-MR). Instead of~\eqref{min_f}, Newton-MR considers the following auxiliary optimization problem: 
$$\min_{\w\in \R^d} ||\nabla f(\w)||^2.$$ 
Note that the  minimizers of this auxiliary problem and \eqref{min_f} are the same when $f(\cdot)$ is convex. Then, the update direction in the $(t+1)$-th iteration is obtained by minimizing the Taylor's expansion of $|| \nabla f(\w_t + \p)||^2$, that is, 
$$\p_{t} = \arg\min_{\w\in \R^d} ||\nabla f(\w_t) +   \h_t\p||^2.$$
The general solution of the above problem is given by  $\p = -[\h_t]^{\dagger}\nabla f(\w_t) + (\I - \h_t[\h_t]^{\dagger})\textbf{q}, ~\forall~ \textbf{q} \in \R^d$, where $[\cdot]^{\dagger}$ is the Moore-Penrose inverse.  Among these, the minimum norm solution is chosen, which gives the update direction in the $t$-th iteration as $\p_t = - \hd_t\nabla f(\w_t)$. Thus, the model update is
\begin{equation}
\label{eq:MR-update}
\w_{t+1} = \w_t + \p_t = \w_t - [\nabla^2f(\w_t)]^{\dagger}\nabla f(\w_t).
\end{equation}
OverSketched Newton considers both of these variants.

\section{OverSketched Newton}\label{sec:osn}

In many applications like machine learning where the training data itself is noisy, using the exact 
Hessian is not necessary. Indeed, many results in the literature prove convergence guarantees for Newton's method when the Hessian is computed approximately using ideas from 
RandNLA for a single machine (e.g. \cite{mert, fred1, fred2, nocedal16}). In particular, these methods perform a form of dimensionality reduction for the Hessian using random matrices, called sketching matrices. Many popular sketching schemes have been proposed in the literature, for example, sub-Gaussian, Hadamard, random row sampling, sparse Johnson-Lindenstrauss, etc. \cite{woodruff_now, Mah-mat-rev_BOOK}. 
Inspired from these works, we present OverSketched Newton, a stochastic second order algorithm for solving---\emph{on serverless systems, in a distributed, straggler-resilient manner}---problems of the form \eqref{min_f}.

\begin{algorithm}[t]
\SetAlgoLined
\SetKwInOut{Input}{Input}
\Input{Matrix $\A\in \R^{t\times s}$, vector $x \in \R^s$, and block size parameter $b$}
\KwResult{$\y = \A\x,$ where $\y \in \R^{s}$ is the product of matrix $\A$ and vector $\x$}
\textbf{Initialization}: Divide $\A$ into $T = t/b$ row-blocks, each of dimension $b\times s$\\
\textbf{Encoding}: Generate coded $\A$, say $\A_c$, in parallel using a 2D product code by arranging the row blocks of $\A$ in a 2D structure of dimension $\sqrt{T}\times \sqrt{T}$ and adding blocks across rows and columns to generate parities; see Fig. 2 in \cite{tavor} for an illustration\\
  \For{$i=1$ to $T + 2\sqrt{T} + 1$}{
1. Worker $W_{i}$ receives the $i$-th row-block of $\A_c$, say $\A_c(i,:)$, and $\x$ from cloud storage\\
2. $W_{i}$ computes $\y(i) = \A(i,:) \x$\\
3. Master receives $\y(i)$ from worker $W_{i}$
}
\textbf{Decoding}: Master checks if it has received results from enough workers to reconstruct $\y$. Once it does, it decodes $\y$ from available results using the peeling decoder
 \caption{Straggler-resilient distributed computation of $\A\x$ using codes}
 \label{algo:coded_computation}
\end{algorithm}


%
%

{\bf Distributed straggler-resilient gradient computation:} 
OverSketched Newton computes the full gradient in each iteration by using tools from error-correcting codes~\cite{kangwook1,tavor}. 
Our key observation is that, for several commonly encountered optimization problems, gradient computation relies on matrix-vector multiplications (see Sec.~\ref{examples_probs} for examples). 
We leverage coded matrix multiplication technique from~\cite{tavor} to perform the large-scale matrix-vector multiplication in a distributed straggler-resilient manner. 
The idea of coded matrix multiplication is explained in Fig.~\ref{fig:kangwook1}; detailed steps are provided in Algorithm \ref{algo:coded_computation}.

\begin{figure}[t]
\begin{minipage}{.49\textwidth}
        \centering
        \includegraphics[scale = 0.39]{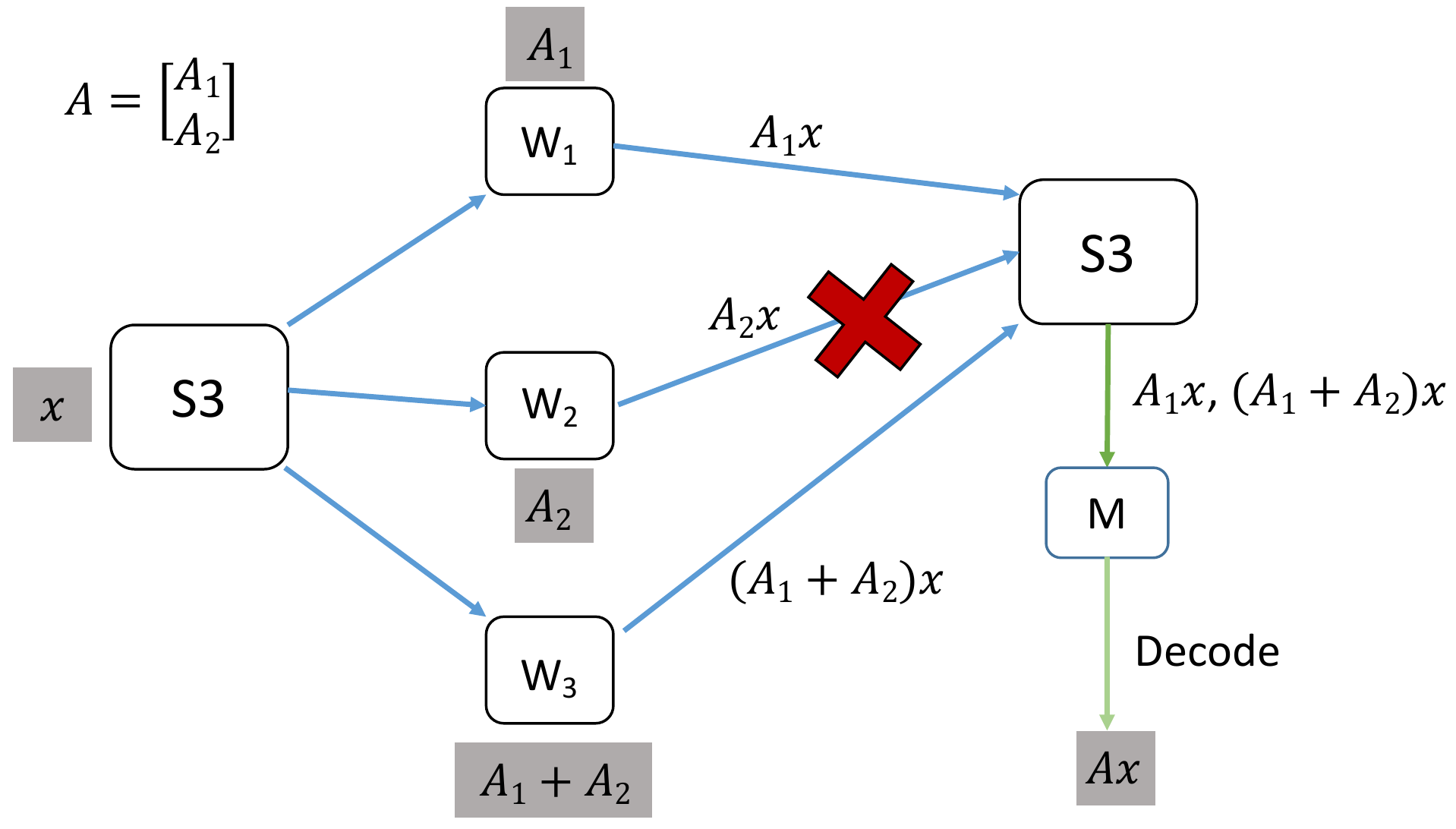}
        \caption{\small \textbf{Coded matrix-vector multiplication}: Matrix $\A$ is divided into 2 row chunks $\A_1$ and $\A_2$. During encoding, redundant chunk $\A_1+\A_2$ is created. Three workers obtain $\A_1$,$\A_2$ and $\A_1+\A_2$ from the cloud storage S3, respectively, and then multiply by $\x$ and write back the result to the cloud. The master $M$ can decode $\A\x$ from the results of any two workers, thus being resilient to one straggler ($W_2$ in this case).}
        \label{fig:kangwook1}
    \end{minipage}
    ~~
    \begin{minipage}{.5\textwidth}
            \vspace{4mm}
        \centering
        \includegraphics[scale=0.46]{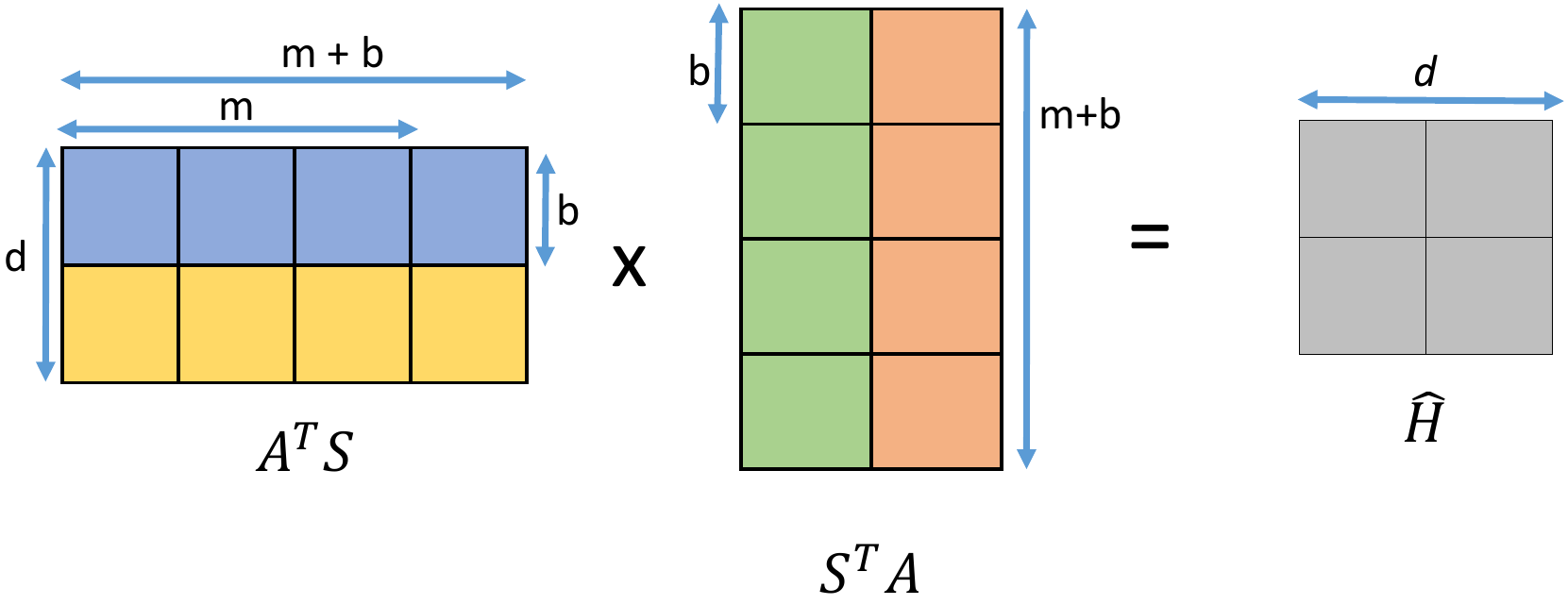}
        \caption{\small \textbf{OverSketch-based approximate Hessian computation:} 
        First, the matrix $\A$---satisfying $\A^T\A = \nabla^2 f(\w_t)$---is sketched in parallel using the sketch in \eqref{sketch_matrix}. Then, each worker receives a block of each of the sketched matrices $\A^T\s$ and $\s^T\A$, multiplies them, and communicates back its results for reduction. During reduction, stragglers can be ignored by the virtue of ``over'' sketching. For example, here the desired sketch dimension $m$ is increased by block-size $b$ for obtaining resiliency against one straggler for each block of $\hh$.}
        \label{fig:oversketch}
    \end{minipage}
    \end{figure}

{\bf Distributed straggler-resilient approximate Hessian computation:} 
For several commonly encountered optimization problems, Hessian computation involves matrix-matrix multiplication for a pair of large matrices (see Sec.~\ref{examples_probs} for several examples).  
For computing the large-scale matrix-matrix multiplication in parallel in serverless systems, we propose to use a straggler-resilient scheme called {\it OverSketch} from~\cite{oversketch}. 
OverSketch does {\it blocked partitioning} of input matrices where each worker works on square blocks of dimension $b$. Hence, it is more communication efficient than existing coding-based straggler mitigation schemes that do na\"ive row-column partition of input matrices \cite{kangwook2, poly_codes}. 
We note that it is well known in HPC that blocked partitioning of input matrices can lead to communication-efficient methods for distributed multiplication \cite{oversketch, 2.5d, summa}.

OverSketch uses a sparse sketching matrix based on Count-Sketch~\cite{woodruff_now}. 
It has similar computational efficiency and accuracy guarantees as that of the Count-Sketch, with two additional properties: it is amenable to distributed implementation; and it is resilient to stragglers. 
More specifically, the OverSketch matrix is constructed as follows~\cite{oversketch}. 

Recall that the Hessian $\nabla^2 f(\cdot) \in\mathbb{R}^{d\times d}$. 
First choose the desired sketch dimension $m$ (which depends on $d$),  block-size $b$ (which depends on the memory of the workers), and straggler tolerance $\zeta > 0$ (which depends on the distributed system). 
Then, define $N = m/b$ and $e = \zeta N$, for some constant $\zeta > 0$. Here $\zeta$ is the fraction of stragglers that we want our algorithm to tolerate.  Thus, $e$ is the maximum number of stragglers per $N+e$ workers that can be tolerated.
The sketch $\s$ is then given by
\begin{equation}\label{sketch_matrix}
\s = \frac{1}{\sqrt{N}}(\s_1, \s_2,\cdots,\s_{N+e}),
\end{equation}
where $\s_i \in \R^{n\times b}$, for all $i\in [1,N+e]$, are i.i.d. Count-Sketch matrices\footnote{Each of the Count-Sketch matrices $\s_i$ is constructed (independently of others) as follows. 
First, for every row $j$, $j\in[n]$, of $\s_i$, independently choose a column $h(j) \in [b]$. 
Then, select a uniformly random element from $\{-1,+1\}$, denoted as $\sigma(i)$. 
Finally, set $\s_{i}(j, h(j)) = \sigma(i)$ and set $\s_{i}(j,l) = 0$ for all $l \ne h(j)$. 
(See~\cite{woodruff_now, oversketch} for details.)} with sketch dimension $b$. Note that $\s\in \R^{n\times (m+eb)}$, where $m = Nb$ is the required sketch dimension and $e$ is the over-provisioning parameter to provide resiliency against $e$ stragglers per $N+e$ workers. We leverage the straggler resiliency of OverSketch to obtain the sketched Hessian in a distributed straggler-resilient manner.  An illustration of OverSketch is provided in Fig.~\ref{fig:oversketch}; see Algorithm \ref{algo:OverSketch} for details.


\begin{algorithm}[t]
\SetAlgoLined
\SetKwInOut{Input}{Input}
\Input{Matrices $\A\in \R^{n\times d}$, required sketch dimension $m$, straggler tolerance $e$, block-size $b$. Define $N = m/b$}
\KwResult{$\hh \approx \A^T\times \A$ }
\textbf{Sketching}: Use sketch in Eq. \eqref{sketch_matrix} to obtain $\tA = \s^T\A$ distributedly (see Algorithm 5 in \cite{oversketch} for details)\\
\textbf{Block partitioning}: Divide $\tA$ into $(N + e)\times d/b$ matrix of $b\times b$ blocks\\
\textbf{Computation phase}: Each worker takes a block of $\tA$ and $\tA^T$ each and multiplies them. This step invokes $(N+e)d^2/b^2$ workers, where $N+e$ workers compute one block of $\hh$\\
\textbf{Termination}: Stop computation when any $N$ out of $N+e$ workers return their results for each block of $\hh$\\
\textbf{Reduction phase}: Invoke $d^2/b^2$ workers to aggregate results during the computation phase, where each worker will calculate one block of $\hh$
 \caption{Approximate Hessian calculation on serverless systems using OverSketch}
 \label{algo:OverSketch}
\end{algorithm}



\begin{algorithm}[t]
\SetAlgoLined
\SetKwInOut{Input}{Input}
\Input{Convex function $f$; Initial iterate $\w_0\in\mathbb{R}^d$; Line search parameter $0 < \beta \leq 1/2$; Number of iterations $T$}
  \For{$t=1$ to $T$}{
Compute full gradient $\g_t$ in a distributed fashion using Algorithm \ref{algo:coded_computation} \\
Compute sketched Hessian matrix $\hat{\mathbf{H}}_t$ in a distributed fashion using Algorithm \ref{algo:OverSketch}\\
\eIf{$f$ is strongly-convex}{
Compute the update direction at the master as: $\p_t = - [\hat{\mathbf{H}}_t]^{-1}\nabla f(\w_t)$\\
Compute step-size $\alpha_t$ satisfying the line-search condition~\eqref{eq:line-search} in a distributed fashion\\
}
{
Compute the update direction at the master as: $\p_t = - [\hat{\mathbf{H}}_t]^{\dagger}\nabla f(\w_t)$\\
Find step-size $\alpha_t$ satisfying the line-search condition~\eqref{eq:line-search2} in a distributed fashion
}
Compute the model update $\w_{t+1} = \w_t + \alpha_t \p_t$ at the master
}
 \caption{OverSketched Newton in a nutshell}
 \label{algo:OSN}
\end{algorithm}

{\bf Model update:} Let $\hh_t= \A_t^T\s_t\s_t^T\A_t$, where $\A_t$ is the square root of the Hessian $\nabla^2f(\w_t)$, and $\s_t$ is an independent realization of \eqref{sketch_matrix} at the $t$-th iteration. For strongly-convex functions, the update direction is $\p_t = -\hh_t^{-1}\nabla f(\w_t)$.  We use line-search to choose the step-size, that is, find
\begin{align}\label{eq:line-search}
\alpha_t = \max_{\alpha \leq 1} ~\alpha
\text{~~~such that~~~}
 f(\w_t+ \alpha\p_t) \leq f(\w_t) + \alpha\beta\p_t^T\nabla f(\w_t),
\end{align}
for some constant $\beta \in (0,1/2]$. For weakly-convex functions, the update direction (inspired by Newton-MR~\cite{newton_mr}) is $\p_t = -\hh_t^{\dagger}\nabla f(\w_t)$, where  $\ddh_t$ is the Moore-Penrose inverse of $\hh_t$. To find the update $\w_{t+1}$, we find the right step-size $\alpha_t$ using line-search in \eqref{eq:line-search}, 
but with $f(\cdot)$ replaced by $||\nabla f(\cdot)||^2$ and $\nabla f(\w_t)$ replaced by $2\hh_t\nabla f(\w_t)$, according to the objective in $||\nabla f(\cdot)||^2$. 
More specifically, for some constant $\beta \in (0,1/2]$, 
\begin{align}\label{eq:line-search2}
\alpha_t = \max_{\alpha\leq 1} ~\alpha \text{~~~such that~~~} 
 ||\nabla f(\w_t+ \alpha\p_t)||^2 \leq ||\nabla f(\w_t)||^2 + 2\alpha\beta\p_t^T\hh_t\nabla f(\w_t).
\end{align}
Note that for OverSketched Newton, we use $\hh_t$ in the line-search since the exact Hessian is not available.  
The update in the $t$-th iteration in both cases is given by 
$$\w_{t+1} = \w_t + \alpha_t\p_t.$$ 
Note that 
\eqref{eq:line-search} 
line-search can be solved approximately in single machine systems using Armijo backtracking line search \cite{boyd,nocedal-wright}. OverSketched Newton is concisely described in Algorithm \ref{algo:OSN}. 
In Section \ref{line-search}, we describe how to implement distributed line-search in serverless systems when the data is stored in the cloud. 
Next, we prove convergence guarantees for OverSketched Newton that uses the sketch matrix in \eqref{sketch_matrix} and full gradient for approximate Hessian computation. 

\subsection{Convergence Guarantees}
 
First, we focus our attention to strongly convex functions. We consider the following assumptions. We note that these assumptions are standard for analyzing approximate Newton methods, (e.g., see ~\cite{mert,fred1,fred2}.

{\bf Assumptions:} 
\vspace{-4mm}
\begin{enumerate}
\item $f$ is twice-differentiable;
\item $f$ is $k$-strongly convex ($k > 0$), that is,
$$\nabla^2f(\w) \succeq k\I ;$$
\item $f$ is $M$-smooth ($k\leq M < \infty$), that is, 
$$\nabla^2f(\w) \preceq M\I;$${}
\item the Hessian is $L$-Lipschitz continuous, that is, for any $\pmb\Delta\in \R^d$
$$||\nabla^2 f(\w +\Del) - \nabla^2f(\w)||_{2} \leq L||\Del||_2,$$
where $||\cdot||_{2}$ is the spectral norm for matrices. 
\end{enumerate}

We first prove the following ``global'' convergence guarantee which shows that OverSketched Newton would converge from any random initialization of $\w_0\in \R^d$ with high probability. 

\begin{theorem}[\textbf{Global convergence for strongly-convex $f$}]\label{global_thm}
Consider Assumptions 1, 2, and 3 and step-size $\alpha_t$ given by Eq. \eqref{eq:line-search}. Let $\w^*$ be the optimal solution of \eqref{min_f}. 
Let $\epsilon$ and $\mu$ be positive constants. Then, 
using the sketch in \eqref{sketch_matrix} with a sketch dimension $Nb+eb = \Omega(\frac{d^{1+\mu}}{\epsilon^2})$ and the number of column-blocks $N +e= \Theta_\mu(1/\epsilon)$,
the updates for OverSketched Newton, for any $\w_t \in \R^d$, satisfy
\begin{equation*}
f(\w_{t+1}) - f(\w^*) \leq (1-\rho)(f(\w_{t}) - f(\w^*)),
\end{equation*}
with probability at least $1 - 1/d^\tau$, where $\rho = \frac{2\alpha_t\beta k}{M(1+\epsilon)}$ and $\tau > 0$ is a constant depending on $\mu$ and constants in $\Omega(\cdot)$ and $\Theta(\cdot)$. Moreover, $\alpha_t$ satisfies $\alpha_t\geq \frac{2(1-\beta)(1-\epsilon)k}{M}$. 
\end{theorem}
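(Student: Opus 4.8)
The plan is to reduce the statement to two ingredients: a one-shot spectral approximation guarantee for the sketched Hessian, and the standard analysis of Newton's method with Armijo line search. The crucial structural fact I would establish first is that the OverSketch matrix $\s_t$ of \eqref{sketch_matrix}, at the prescribed sketch dimension, is an $\epsilon$-subspace embedding for the column space of $\A_t$, so that with probability at least $1 - 1/d^\tau$
\begin{equation*}
(1-\epsilon)\h_t \preceq \hh_t \preceq (1+\epsilon)\h_t,
\end{equation*}
where $\h_t = \A_t^T\A_t = \nabla^2 f(\w_t)$ and $\hh_t = \A_t^T\s_t\s_t^T\A_t$. This is where the dimension requirements $Nb+eb = \Omega(d^{1+\mu}/\epsilon^2)$ and $N+e=\Theta_\mu(1/\epsilon)$ enter: I would invoke the subspace-embedding property of the Count-Sketch-based sketch from \cite{oversketch}, being careful that the $e$ straggler blocks are simply discarded and that averaging over the remaining $N$ blocks drives the failure probability down to $1/d^\tau$. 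Note this is a per-iteration statement, so no union bound over $t$ is required.

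Granting this sandwich, I would combine it with strong convexity and smoothness to get $\hh_t \succeq (1-\epsilon)k\I$ and $\hh_t \preceq (1+\epsilon)M\I$, equivalently $\hh_t^{-1} \succeq \frac{1}{(1+\epsilon)M}\I$. Next I would lower-bound the step size. Writing $\p_t = -\hh_t^{-1}\nabla f(\w_t)$ so that $\nabla f(\w_t)^T\p_t = -\p_t^T\hh_t\p_t$, and using $M$-smoothness to upper-bound $f(\w_t+\alpha\p_t)$ by its quadratic model, a short computation shows the Armijo inequality \eqref{eq:line-search} holds for every $\alpha \leq \frac{2(1-\beta)\,\p_t^T\hh_t\p_t}{M\|\p_t\|_2^2}$. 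Since $\p_t^T\hh_t\p_t \geq (1-\epsilon)k\|\p_t\|_2^2$ by the lower sandwich, any $\alpha \leq \frac{2(1-\beta)(1-\epsilon)k}{M}$ is admissible, so backtracking returns $\alpha_t \geq \frac{2(1-\beta)(1-\epsilon)k}{M}$, which is the claimed bound.

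To obtain the contraction, I would feed $\alpha_t$ back into the Armijo condition, yielding
\begin{equation*}
f(\w_{t+1}) \leq f(\w_t) - \alpha_t\beta\,\nabla f(\w_t)^T\hh_t^{-1}\nabla f(\w_t) \leq f(\w_t) - \frac{\alpha_t\beta}{(1+\epsilon)M}\|\nabla f(\w_t)\|_2^2,
\end{equation*}
where the second inequality uses $\hh_t^{-1}\succeq \frac{1}{(1+\epsilon)M}\I$. Finally I would apply the Polyak--\L{}ojasiewicz inequality $\|\nabla f(\w_t)\|_2^2 \geq 2k(f(\w_t)-f(\w^*))$, which is a consequence of $k$-strong convexity, and subtract $f(\w^*)$ from both sides to arrive at the factor $1-\rho$ with $\rho = \frac{2\alpha_t\beta k}{M(1+\epsilon)}$.

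The main obstacle lies entirely in the first step: establishing the $(1\pm\epsilon)$ spectral sandwich for the block-structured, straggler-tolerant OverSketch sketch with the high-probability guarantee $1 - 1/d^\tau$ (rather than merely constant failure probability), at the stated sketch dimension and number of column-blocks. The subsequent descent argument is routine. I would also flag a mild implicit assumption that $\frac{2(1-\beta)(1-\epsilon)k}{M}\leq 1$, so the line search is not truncated at $\alpha=1$; in the contrary regime one simply has $\alpha_t=1$ and the contraction continues to hold with the corresponding value of $\rho$.
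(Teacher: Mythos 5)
Your proposal follows essentially the same route as the paper's proof: the subspace-embedding sandwich for $\hh_t$ (the paper's Lemma \ref{lemma1}, proved by invoking the sparse-sketch embedding result of Nelson--Nguyen), then $(1-\epsilon)k\I \preceq \hh_t \preceq (1+\epsilon)M\I$, the smoothness-based Armijo admissibility bound giving $\alpha_t \geq \frac{2(1-\beta)(1-\epsilon)k}{M}$, the descent inequality $f(\w_t)-f(\w_{t+1}) \geq \frac{\alpha_t\beta}{M(1+\epsilon)}\|\nabla f(\w_t)\|_2^2$, and finally the Polyak--\L{}ojasiewicz inequality from strong convexity to close the contraction. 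Your added caveat about the line search being capped at $\alpha = 1$ (so that strictly one gets $\alpha_t \geq \min\bigl(1, \frac{2(1-\beta)(1-\epsilon)k}{M}\bigr)$) is a legitimate refinement that the paper glosses over, but it does not change the argument.
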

\begin{proof}
See Section \ref{proof_global_conv}. 
\end{proof}

Theorem \ref{global_thm} guarantees the global convergence of OverSketched Newton starting with any initial estimate $\w_0 \in \R^d$ to the optimal solution $\w^*$ with at least a linear rate. 

Next,
we can also prove an additional ``local'' convergence guarantee for OverSketched Newton, under the assumption that $\w_0$ is sufficiently close to $\w^*$. 
\begin{theorem}[\textbf{Local convergence for strongly-convex $f$}]\label{convergence_thm} 
Consider Assumptions 1, 2, and 4 and step-size $\alpha_t = 1$. Let $\w^*$ be the optimal solution of \eqref{min_f} and $\gamma$ and $\beta$ be the minimum and maximum eigenvalues of $\nabla^2f(\w^*)$, respectively. 
Let $\epsilon \in (0,\gamma/(8\beta)]$ and $\mu > 0$. Then, using the sketch in \eqref{sketch_matrix} with a sketch dimension $Nb + eb = \Omega(\frac{d^{1+\mu}}{\epsilon^2})$ 
and the number of column-blocks $N + e = \Theta_\mu(1/\epsilon)$, the updates for OverSketched Newton, with initialization $\w_0$ such that $||\w_0 - \w^*||_2 \leq \frac{\gamma}{8L}$, follow
\begin{equation*}\label{conv_eqn}
||\w_{t+1} - \w^*||_2 \leq \frac{25L}{8\gamma}||\w_t - \w^*||^2_2 + \frac{5\epsilon\beta}{\gamma}||\w_t - \w^*||_2 ~~~\text{for}~ t = 1,2,\cdots, T,
\end{equation*}
with probability at least $1 -T/d^\tau$, where $\tau > 0$ is a constant depending on $\mu$ and constants in $\Omega(\cdot)$ and $\Theta(\cdot)$. 
\end{theorem}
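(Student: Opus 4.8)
The plan is to isolate the two ingredients that drive any local analysis of an approximate Newton step and to treat them separately: (i) a spectral (subspace-embedding) guarantee that the sketched Hessian $\hh_t$ is a multiplicative approximation of the true Hessian $\h_t := \nabla^2 f(\w_t)$, and (ii) the classical quadratic Taylor bound that controls the discretization error of a Newton step under a Lipschitz Hessian. For (i), I would invoke the OverSketch subspace-embedding property from \cite{oversketch}: since $\A_t^T\A_t = \h_t$ and $\hh_t = \A_t^T\s_t\s_t^T\A_t$, the choice $Nb+eb = \Omega(d^{1+\mu}/\epsilon^2)$ with $N+e = \Theta_\mu(1/\epsilon)$ blocks (which also guarantees the embedding survives after any $e$ straggling blocks are dropped in Algorithm~\ref{algo:OverSketch}) yields, with probability at least $1 - 1/d^\tau$,
$$(1-\epsilon)\h_t \preceq \hh_t \preceq (1+\epsilon)\h_t,$$
and hence $\|\hh_t - \h_t\|_2 \leq \epsilon\|\h_t\|_2$ and $\|\hh_t^{-1}\|_2 \leq 1/\big((1-\epsilon)\lambda_{\min}(\h_t)\big)$.

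Next I would set up the error recursion. With $\alpha_t = 1$, the update gives $\w_{t+1} - \w^* = (\w_t-\w^*) - \hh_t^{-1}\nabla f(\w_t)$, and since $\nabla f(\w^*)=0$ I would write $\nabla f(\w_t) = \int_0^1 \nabla^2 f(\w^* + s(\w_t-\w^*))\,ds\,(\w_t-\w^*)$. Factoring out $\hh_t^{-1}$ yields
$$\w_{t+1}-\w^* = \hh_t^{-1}\Big[(\hh_t-\h_t)(\w_t-\w^*) + \big(\h_t(\w_t-\w^*) - \nabla f(\w_t)\big)\Big].$$
The first bracketed term is the sketching error, bounded by $\epsilon\|\h_t\|_2\|\w_t-\w^*\|_2$ via (i); the second is the Newton discretization error, bounded by $\frac{L}{2}\|\w_t-\w^*\|_2^2$ using Assumption~4 and the integral mean-value estimate. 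To convert the $\h_t$-dependent quantities into the stated $\gamma,\beta$ constants, I would use that on the ball $\|\w_t-\w^*\|_2 \leq \gamma/(8L)$ the Lipschitz bound gives $\|\h_t - \nabla^2 f(\w^*)\|_2 \leq \gamma/8$, so $\lambda_{\min}(\h_t)\geq 7\gamma/8$ and $\|\h_t\|_2 \leq \beta + \gamma/8$; combined with $\epsilon \leq \gamma/(8\beta) \leq 1/8$, these yield $\|\hh_t^{-1}\|_2 = O(1/\gamma)$. Substituting produces the linear-quadratic form $\|\w_{t+1}-\w^*\|_2 \leq \frac{25L}{8\gamma}\|\w_t-\w^*\|_2^2 + \frac{5\epsilon\beta}{\gamma}\|\w_t-\w^*\|_2$, the numerical constants arising from bounding these ratios loosely.

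Finally, I would close the argument by induction over $t$. The in-ball eigenvalue estimates above are precisely what make the per-step bound valid, so I must show the iterates never leave $\{\w : \|\w-\w^*\|_2 \leq \gamma/(8L)\}$. Assuming $\w_t$ lies in this ball, plugging $\|\w_t-\w^*\|_2 \leq \gamma/(8L)$ and $\epsilon \leq \gamma/(8\beta)$ into the recursion shows the right-hand side is at most a constant factor times $\|\w_t-\w^*\|_2$, keeping $\w_{t+1}$ in the ball and forcing contraction once $\epsilon$ is small enough relative to $\gamma/\beta$; the base case is the hypothesis on $\w_0$. A union bound over the $T$ independent sketches $\s_1,\dots,\s_T$ then inflates the failure probability from $1/d^\tau$ to $T/d^\tau$, giving the claimed high-probability guarantee.

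The main obstacle is ingredient (i): establishing the multiplicative spectral approximation with the stated failure probability and, crucially, showing it is robust to the straggler-induced deletion of $e$ of the $N+e$ Count-Sketch blocks. This is the genuinely RandNLA part of the argument -- the reason the sketch dimension must scale like $d^{1+\mu}/\epsilon^2$ and the block count like $1/\epsilon$ -- whereas the remainder is the standard approximate-Newton recursion. A secondary technical point is the bookkeeping of constants so that the neighborhood $\|\w-\w^*\|_2 \leq \gamma/(8L)$ is genuinely invariant under the recursion across the allowed range of $\epsilon$.
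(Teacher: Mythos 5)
Your proposal is correct, and it reaches the stated recursion by a genuinely different route than the paper. You use the classical inexact-Newton decomposition: factor the error as $\w_{t+1}-\w^* = \hh_t^{-1}\bigl[(\hh_t-\h_t)(\w_t-\w^*) + \bigl(\h_t(\w_t-\w^*)-\nabla f(\w_t)\bigr)\bigr]$ and control everything with operator-norm bounds, namely $\|\hh_t-\h_t\|_2\leq\epsilon\|\h_t\|_2$ and $\|\hh_t^{-1}\|_2\leq 1/\bigl((1-\epsilon)\lambda_{\min}(\h_t)\bigr)$, both of which follow from the paper's Lemma \ref{lemma1} (which you correctly identify as the genuinely RandNLA ingredient; the paper establishes it via Theorem 8 of \cite{nelson} applied to the post-straggler sketch). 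The paper instead never inverts $\hh_t$: it exploits the variational characterization of $\w_{t+1}$ as the minimizer of the sketched quadratic model, compares model values at $\w^*$ and $\w_{t+1}$, and lower-bounds the resulting quadratic form $\Delta_{t+1}^T\hh_t\Delta_{t+1} = \|\s_t^T\A_t\Delta_{t+1}\|_2^2 \geq (1-\epsilon)\Delta_{t+1}^T\h_t\Delta_{t+1}$ via the embedding, before invoking Cauchy--Schwarz and the Lipschitz-Hessian integral bound. A second, minor divergence is in the induction: the paper lets the iterates grow into the larger ball $\|\Delta_t\|_2\leq\gamma/(5L)$ and proves that ball is absorbing, whereas you claim invariance of the initial ball $\gamma/(8L)$ itself; your claim does check out numerically, since your route yields coefficients roughly $\tfrac{72}{49}\tfrac{\epsilon\beta}{\gamma}$ and $\tfrac{32}{49}\tfrac{L}{\gamma}$, which are strictly smaller than the stated $\tfrac{5\epsilon\beta}{\gamma}$ and $\tfrac{25L}{8\gamma}$, so the theorem follows a fortiori. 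What each approach buys: yours is the more elementary textbook argument with tighter constants; the paper's variational argument avoids explicit inverse bounds and is the style (following the sub-sampled Newton literature) that generalizes more readily to updates defined only implicitly as minimizers of a model, e.g.\ constrained or pseudo-inverse variants such as the weakly-convex case the paper treats next.
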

\begin{proof}
See Section \ref{proof_local_conv}.
\end{proof}
Theorem \ref{convergence_thm} implies that the convergence is linear-quadratic in error $\Delta_t = \w_t - \w^*$. 
Initially, when $||\Delta_t||_2$ is large, the first term of the RHS will dominate and the convergence will be quadratic, that is, $||\Delta_{t+1}||_2 \lesssim \frac{25L}{8\gamma}||\Delta_t||^2_2$. In later stages, when $||\w_t - \w^*||_2$ becomes sufficiently small, the second term of RHS will start to dominate and the convergence will be linear, that is, $||\Delta_{t+1}||_2 \lesssim \frac{5\epsilon\beta}{\gamma}||\Delta_t||_2$. At this stage, the sketch dimension can be increased to reduce $\epsilon$ to diminish the effect of the linear term and improve the convergence~rate in practice. Note that, for second order methods, the number of iterations $T$ is in the order of tens in general while the number of features $d$ is typically in thousands. Hence, the probability of failure is generally small (and can be made negligible by choosing $\tau$ appropriately). 



Though the works \cite{mert, fred1, fred2,nocedal16,nocedal17} also prove convergence guarantees for approximate Hessian-based optimization, no convergence results exist for the OverSketch matrix in Eq. \eqref{sketch_matrix} to the best of our knowledge. OverSketch has many nice properties like sparsity, input obliviousness, and amenability to distributed implementation, and our convergence guarantees take into account the block-size $b$ (that captures the amount of communication at each worker) and the number of stragglers $e$, both of which are a property of the distributed system. On the other hand, algorithms in \cite{mert, fred1, fred2,nocedal16,nocedal17}  are tailored to run on a single machine.

Next, we consider the case of weakly-convex functions. 
For this case, we consider two more assumptions on the Hessian matrix, similar to \cite{newton_mr}. These assumptions are a relaxation of the strongly-convex case.

{\bf Assumptions:} 
\vspace{-4mm}
\begin{enumerate}
  \setcounter{enumi}{4}
  \item There exists some $\eta > 0$ such that, $\forall~\w\in \R^d$, 
  $$||(\nabla^2 f(\w))^{\dagger}||_2 \leq1/\eta.$$ 
  This assumption establishes regularity on the pseudo-inverse of $\nabla^2 f(\x)$. 
It also implies that
$||\nabla^2 f(\w)\p|| \geq \eta||\p|| ~\forall ~p \in\text{ Range}(\nabla^2 f(\w))$, that is, the minimum `non-zero' eigenvalue of $\nabla^2 f(\w)$ is lower bounded by $\eta$;  just as in the $k$-strongly convex case, the smallest eigenvalue is greater than $k$.

  \item Let $\U \in \R^{d\times d}$ be any arbitrary orthogonal basis for $\text{Range} (\nabla^2 f(\w))$, there exists $0 <\nu \leq 1$, such that, 
  $$||\bf{U}^T\nabla f(\w)||^2 \geq \nu ||\nabla f(\w)||^2 ~~\forall ~~\w\in\R^d.$$
  This assumption ensures that there is always a non-zero component of the gradient in the subspace spanned by the Hessian, and, thus, ensures that the model update $-\hh_t^{\dagger}\nabla f(\w_t)$ will not be zero.
\end{enumerate} 
Note that the above assumptions are always satisfied by strongly-convex functions. Next, we prove global convergence of OverSketched Newton when the objective is weakly-convex.


\begin{theorem}[\textbf{Global convergence for weakly-convex $f$}]\label{global_thm_weakly_convex}
Consider Assumptions 1,3,4,5 and 6 and step-size $\alpha_t$ given by Eq. \eqref{eq:line-search2}. 
Let $\epsilon \in \left(0, \frac{(1-\beta)\nu\eta}{2M}\right]$ and $\mu > 0$. 
Then, using an OverSketch matrix with a sketch dimension $Nb + eb = \Omega(\frac{d^{1+\mu}}{\epsilon^2})$ and the number of column-blocks $N +e= \Theta_\mu(1/\epsilon)$, the updates for OverSketched Newton, for any $\w_t \in \R^d$, satisfy
\vspace{-2mm}
\begin{align*}
||\nabla f(\w_{t+1})||^2 \leq \bigg(1 - 2\beta\alpha\nu\frac{(1-\epsilon)\eta}{M(1+\epsilon)}\bigg)||\nabla f(\w_t)||^2,
\end{align*}
with probability at least $1 - 1/d^\tau$, where 
$\alpha = \frac{\eta}{2Q}\big[(1-\beta)\nu\eta - 2\epsilon M\big], Q = (L||\nabla f(\w_0)|| + M^2),$ $\w_0$ is the initial iterate of the algorithm and $\tau > 0$ is a constant depending on $\mu$ and constants in $\Omega(\cdot)$ and $\Theta(\cdot)$.
\end{theorem}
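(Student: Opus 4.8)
The plan is to treat the Newton-MR update as an inexact descent step for the auxiliary objective $g(\w) := ||\nabla f(\w)||_2^2$, whose minimizers coincide with those of \eqref{min_f} for convex $f$. Writing $\h_t = \nabla^2 f(\w_t)$, the search direction is $\p_t = -\hh_t^{\dagger}\nabla f(\w_t)$ and the step is accepted by the Armijo-type rule \eqref{eq:line-search2}, whose surrogate directional-derivative term is $2\p_t^T\hh_t\nabla f(\w_t)$. I would prove the claim in four moves: (i) invoke the subspace-embedding property of the OverSketch matrix to tie $\hh_t$ to $\h_t$; (ii) show $\p_t$ is a genuine descent direction for $g$; (iii) lower-bound the backtracked step-size $\alpha_t$; and (iv) feed that bound into \eqref{eq:line-search2} to obtain the stated per-iteration contraction.

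For (i), the sketch dimension $Nb+eb = \Omega(d^{1+\mu}/\epsilon^2)$ with $N+e = \Theta_\mu(1/\epsilon)$ column-blocks is exactly the regime in which OverSketch is an $\epsilon$-subspace embedding for the column space of $\A_t$ with probability at least $1-1/d^\tau$; this yields $(1-\epsilon)\h_t \preceq \hh_t \preceq (1+\epsilon)\h_t$. Two consequences I would extract are $\mathrm{Range}(\hh_t) = \mathrm{Range}(\h_t)$ (so that $\hh_t^{\dagger}\hh_t$ is the orthogonal projector $\mathbf{P}_t$ onto this common range, and the basis $\U$ of Assumption 6 applies to $\hh_t$ as well) and the pseudo-inverse bound $||\hh_t^{\dagger}||_2 \le \tfrac{1}{(1-\epsilon)\eta}$ via Assumption 5. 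For (ii), a direct computation gives $\p_t^T\hh_t\nabla f(\w_t) = -||\mathbf{P}_t\nabla f(\w_t)||_2^2 = -||\U^T\nabla f(\w_t)||_2^2 \le -\nu\,||\nabla f(\w_t)||_2^2$ by Assumption 6, so the surrogate term in \eqref{eq:line-search2} is strictly negative. Substituted directly into \eqref{eq:line-search2}, this already produces a contraction of the form $1 - 2\alpha_t\beta\nu$, and the remaining work is to pin down $\alpha_t$.

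For (iii), I would first observe that since the extra term in \eqref{eq:line-search2} is non-positive, acceptance forces the monotonicity $||\nabla f(\w_{t+1})||_2 \le ||\nabla f(\w_t)||_2 \le ||\nabla f(\w_0)||_2$. Hence, along the trajectory, the map $\w \mapsto \nabla g(\w) = 2\nabla^2 f(\w)\nabla f(\w)$ is Lipschitz with the uniform constant $2Q$, $Q = L||\nabla f(\w_0)||_2 + M^2$, which I would obtain by adding and subtracting $\nabla^2 f(\w)\nabla f(\w')$ and invoking $M$-smoothness (Assumption 3) and the $L$-Lipschitz Hessian (Assumption 4). This yields the quadratic upper bound $g(\w_t + \alpha \p_t) \le g(\w_t) + 2\alpha\, \p_t^T \h_t \nabla f(\w_t) + Q\alpha^2 ||\p_t||_2^2$, which I compare against the acceptance threshold of \eqref{eq:line-search2}. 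The mismatch between the true term $\p_t^T\h_t\nabla f(\w_t)$ and the surrogate $\p_t^T\hh_t\nabla f(\w_t)$ is controlled by writing $\hh_t^{\dagger}\h_t = \mathbf{P}_t + \hh_t^{\dagger}(\h_t-\hh_t)$ and bounding $||\h_t-\hh_t||_2 \le \epsilon M$ together with $||\hh_t^{\dagger}||_2 \le \tfrac{1}{(1-\epsilon)\eta}$. Solving the resulting scalar inequality in $\alpha$ produces a step-size floor $\alpha_t \ge \alpha = \frac{\eta}{2Q}\big[(1-\beta)\nu\eta - 2\epsilon M\big]$, which is positive precisely when $\epsilon \le \frac{(1-\beta)\nu\eta}{2M}$, exactly the admissible range in the hypothesis. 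Substituting $\alpha_t \ge \alpha$ and the descent-direction bound from (ii) back into \eqref{eq:line-search2}, and tracking the $(1\pm\epsilon)$ factors from the embedding, delivers the claimed contraction factor $1 - 2\beta\alpha\nu\frac{(1-\epsilon)\eta}{M(1+\epsilon)}$.

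The hard part will be step (iii). The gradient-Lipschitz constant of $g$ genuinely depends on $||\nabla f||_2$, so one must first establish the monotone decay of $||\nabla f(\w_t)||_2$ and argue it holds uniformly on the backtracking segment in order to justify the single constant $Q$; and the step-size bound hinges on cleanly separating the sketched operator $\hh_t$ (appearing both in $\p_t$ and in the line-search surrogate) from the true Hessian $\h_t$ (which governs $g$) across the \emph{entire} range space, using only the pseudo-inverse regularity of Assumptions 5 and 6 rather than invertibility. Keeping the $(1\pm\epsilon)$, $\nu$, $\eta$, and $M$ factors mutually consistent through this comparison is what fixes the precise constants in the statement.
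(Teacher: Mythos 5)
Your proposal is correct, and its skeleton matches the paper's proof: Armijo acceptance for $g(\w)=\|\nabla f(\w)\|^2$, the monotone decay $\|\nabla f(\w_t)\|\le\|\nabla f(\w_0)\|$ to fix the uniform constant $Q=L\|\nabla f(\w_0)\|+M^2$ (the paper's Lemma \ref{lemma:smoothness_for_g2}), the splitting $\ddh_t\h_t=\hh_t^{\dagger}\hh_t+\ddh_t(\h_t-\hh_t)$ with $\|\h_t-\hh_t\|_2\le\epsilon M$, and an iteration-independent step-size floor fed back into \eqref{eq:line-search2}. The genuine difference is in the key spectral-transfer step. The paper proves Lemma \ref{lemma:Utg_lower_bound} via truncated SVDs, obtaining the lossy comparison $\|\hU^T\g_t\|^2\ge\frac{(1-\epsilon)\eta}{M(1+\epsilon)}\|\U^T\g_t\|^2$, and then combines it with Assumption 6; you instead observe that the two-sided Loewner bound $(1-\epsilon)\h_t\preceq\hh_t\preceq(1+\epsilon)\h_t$ (the paper's Eq.~\eqref{ineq}) forces $\mathrm{Null}(\hh_t)=\mathrm{Null}(\h_t)$ and hence $\mathrm{Range}(\hh_t)=\mathrm{Range}(\h_t)$, so that $\hh_t^{\dagger}\hh_t$ is the orthogonal projector onto $\mathrm{Range}(\U)$ and $\p_t^T\hh_t\nabla f(\w_t)=-\|\U^T\nabla f(\w_t)\|^2\le-\nu\|\nabla f(\w_t)\|^2$ with no $\epsilon$-dependent loss. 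This buys two things. First, a stronger contraction $1-2\beta\alpha_t\nu$, from which the stated factor follows a fortiori since $\frac{(1-\epsilon)\eta}{M(1+\epsilon)}\le 1$. Second, it actually repairs a wrinkle in the paper's own derivation: substituting Lemma \ref{lemma:Utg_lower_bound} into \eqref{ineq_smooth4} should retain the factor $\frac{(1-\epsilon)\eta}{M(1+\epsilon)}$ inside the bracket, yet \eqref{ineq_smooth5} drops it; your projector identity gives $\|\hU^T\g_t\|^2=\|\U^T\g_t\|^2\ge\nu\|\g_t\|^2$ exactly, which is what makes the floor $\alpha=\frac{\eta}{2Q}\big[(1-\beta)\nu\eta-2\epsilon M\big]$ legitimate as stated. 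Your pseudo-inverse bound $\|\ddh_t\|_2\le 1/((1-\epsilon)\eta)$ also differs from the paper's $1/(\eta-\epsilon M)$, but both are valid (the hypothesis on $\epsilon$ forces $\epsilon\le 1/2$ because $\nu\le 1$ and $\eta\le M$) and both simplify to the same floor. The one piece of technical debt you share with the paper is justifying the descent-lemma inequality for $g$ along the entire backtracking segment, since $Q$ requires $\|\nabla f\|\le\|\nabla f(\w_0)\|$ at intermediate points and not just at accepted iterates; you flag this correctly, and it deserves an explicit argument in a full write-up.
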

\begin{proof}
See Section \ref{proof_global_conv_weakly_convex}.
\end{proof}



Even though we present the above guarantees for the sketch matrix in Eq. \eqref{sketch_matrix}, our analysis is valid for any sketch that satisfies the {\it subspace embedding} property (Lemma~\ref{lemma1}; see \cite{woodruff_now} for details on subspace embedding property of sketches).
To the best of our knowledge, this is the first work to prove the convergence guarantees for weakly-convex functions when the Hessian is calculated approximately using sketching techniques. Later, authors in \cite{liu_fred_app_hess_newtonMR} extended the analysis to the case of general Hessian perturbations with additional assumptions on the type of perturbation.


\subsection{Distributed Line Search}\label{line-search}

Here, we describe a line-search procedure for distributed serverless optimization, which is inspired by the line-search method from \cite{giant} for serverful systems. 
To solve for the step-size $\alpha_t$ as described in the optimization problem in \eqref{eq:line-search}, we set $\beta=0.1$ and choose a candidate set $\mathcal{S} = \{4^0, 4^1, \cdots, 4^{-5}\}$. After the master calculates the descent direction $\p_t$ in the $t$-th iteration, the $i$-th worker calculates $f_i(\w_t+\alpha\p_t)$ for all values of $\alpha$ in the candidate set $\mathcal{S}$, where $f_i(\cdot)$ depends on the local data available at the $i$-th worker and $f(\cdot) = \sum_i f_i(\cdot)$\footnote{For the weakly-convex case, the workers calculate $\nabla f_i(\cdot)$ instead of $f_i(\cdot)$, and the master calculates $||\nabla f(\cdot)||^2$ instead of $f(\cdot)$.}. 

The master then sums the results from workers to obtain $f (\w_t+\alpha\p_t)$ for all values of $\alpha$ in $\mathcal{S}$ and finds the largest $\alpha$ that satisfies the Armijo condition in \eqref{eq:line-search}\footnote{Note that codes can be used to mitigate stragglers during distributed line-search in a manner similar to the gradient computation phase.}. 
Note that line search requires an additional round of communication where the master communicates $\p_t$ to the workers through cloud and the workers send back the function values $f_i(\cdot)$. Finally, the master finds the best step-size from set $\mathcal{S}$ and finds the model estimate $\w_{t+1}$. 

\section{OverSketched Newton on Serverless Systems: Examples}\label{examples_probs}
Here, we describe several examples where our general approach can be applied.

\subsection{Logistic Regression using OverSketched Newton}
\label{sec:logistic}

The optimization problem for supervised learning using Logistic Regression takes the form
\begin{equation}
\min_{\w\in \R^d}~ \bigg\{f(\w) = \frac{1}{n}\sum_{i=1}^n\log(1 + e^{-y_i\w^T\x_i}) + \frac{\lambda}{2}\|\w\|^2_2 \bigg\}. \label{LReg}
\end{equation}
Here, $\x_1,\cdots, \x_n \in \R^{d\times 1}$ and $y_1,\cdots,y_n \in \R$ are training sample vectors and labels, respectively. The goal is to learn the feature vector $\w^* \in \R^{d\times 1}$. 
Let $\mathbf{X} = [\x_1,\x_2,\cdots,\x_n]\in \R^{d\times n}$ and $\y = [y_1,\cdots,y_n]\in \R^{n\times 1}$ be the example and label matrices, respectively.
The gradient for the problem in \eqref{LReg} is given by
$$\nabla f(\w) =  \frac{1}{n}\sum_{i=1}^{n}\frac{-y_i\x_i}{1 + e^{y_i\w_i^T\x_i}} + \lambda\w.$$ 

Calculation of $\nabla f(\w)$ involves two matrix-vector products, $\pmb{\alpha} = \mathbf{X}^T\w$ and $\nabla f(\w) = \frac{1}{n}\mathbf{X}\pmb{\beta} + \lambda\w$, where $\beta_i = \frac{-y_i}{1 + e^{y_i\alpha_i}} ~\forall~i\in[1,\cdots,n]$. When the example matrix is large, these matrix-vector products are performed distributedly using codes.
Faster convergence is obtained by second-order methods which will additionally compute the Hessian $\mathbf{H} = \frac{1}{n}\mathbf{X}\pmb\Lambda\mathbf{X}^T + \lambda\I_d$, where $\pmb\Lambda$ is a diagonal matrix with entries given by $\Lambda(i,i) = \frac{e^{y_i\alpha_i}}{(1+e^{y_i\alpha_i})^2}$. The product $\mathbf{X}\pmb\Lambda\mathbf{X}^T$ is computed approximately in a distributed straggler-resilient manner using the sketch matrix in~\eqref{sketch_matrix}. 
Using the result of distributed multiplication, the Hessian matrix $\textbf{H}$ is calculated at the master and the model is updated as $\w_{t+1} = \w_t - \textbf{H}^{-1}\nabla f(\w_t)$. 
In practice, efficient algorithm like conjugate gradient, that provide a good estimate in a small number of iterations, can be used locally at the master to solve for $\w_{t+1}$ \cite{cg}.\footnote{Note that here we have assumed that the number of features is small enough to perform the model update locally at the master. This is not necessary, and straggler resilient schemes, such as in \cite{local_codes}, can be used to perform distributed conjugate gradient in serverless systems.}

 \begin{algorithm}[t]
\SetAlgoLined
\textbf{Input Data} (stored in cloud storage):
Example Matrix $\X\in \R^{d\times n}$ and vector $\y\in \R^{n\times 1}$ (stored in cloud storage), regularization parameter $\lambda$, number of iterations $T$, Sketch $\s$ as defined in Eq. \eqref{sketch_matrix}\\
\textbf{Initialization}: Define $\w^1 = \mathbf{0}^{d\times1}, \pmb{\beta} = \mathbf{0}^{n\times1}, \pmb\gamma = \mathbf{0}^{n\times 1}$, Encode $\textbf{X}$ and $\textbf{X}^T$ as described in Algorithm~\ref{algo:coded_computation} \\ 
 \For{$t=1$ to $T$}{
 $\pmb{\alpha} = \X\w^t$ \tcp*{Compute in parallel using Algorithm \ref{algo:coded_computation}}
 \For{$i=1$ to $n$}{
 $\beta_i = \frac{-y_i}{1 + e^{y_i\alpha_i}};$
 }
$\mathbf{g} =  \mathbf{X}^T\pmb{\beta}$ \tcp*{Compute in parallel using Algorithm \ref{algo:coded_computation}}
$\nabla f(\w^t) = \mathbf{g} + \lambda\w^t;$\\
\For{$i=1$ to $n$}{
$\gamma(i) = \frac{e^{y_i\alpha_i}}{(1+e^{y_i\alpha_i})^2};$
}
$\textbf{A} = \sqrt{diag(\pmb\gamma)}\mathbf{X}^T$\\
$\hat{\mathbf{H}} = \mathbf{A}^T\s\s^T\mathbf{A}$ 
\tcp*{Compute in parallel using Algorithm \ref{algo:OverSketch}}
$\mathbf{H} = \frac{1}{n}\hat{\textbf{H}} + \lambda\I_d$;\\
$\w^{t+1} = \w^t - \textbf{H}^{-1}\nabla f(\w^t)$;
}
\KwResult{$\w^* = \w_{T+1}$}
 \caption{OverSketched Newton: Logistic Regression for Serverless Computing}
 \label{algo:logistic_lambda}
\end{algorithm}

We provide a detailed description of OverSketched Newton for large-scale logistic regression for serverless systems in Algorithm \ref{algo:logistic_lambda}. Steps 4, 8, and 14 of the algorithm are computed in parallel on AWS Lambda. All other steps are simple vector operations that can be performed locally at the master, for instance, the user's laptop. Steps 4 and 8 are executed in a straggler-resilient fashion using the coding scheme in \cite{tavor}, as illustrated in Fig. \ref{fig:stragglers} and described in detail in Algorithm \ref{algo:coded_computation}.

We use the coding scheme in \cite{tavor} since the encoding can be implemented in parallel and requires less communication per worker compared to the other schemes, for example schemes in \cite{kangwook1, poly_codes}, that use Maximum Distance Separable (MDS) codes. Moreover, the decoding scheme takes linear time and is applicable on real-valued matrices. 
Note that since the example matrix $\X$ is constant in this example, the encoding of $\X$ is done only once before starting the optimization algorithm.
Thus, the encoding cost can be amortized over iterations. 
Moreover, decoding over the resultant product vector requires negligible time and space, even when $n$ is scaling into the millions. 

The same is, however, not true for the matrix multiplication for Hessian calculation (step 14 of Algorithm \ref{algo:logistic_lambda}), as the matrix $\textbf{A}$ changes in each iteration, thus encoding costs will be incurred in every iteration if error-correcting codes are used. Moreover, encoding and decoding a huge matrix stored in the cloud incurs heavy communication cost and becomes prohibitive. 
Motivated by this, we use OverSketch in step 14, as described in Algorithm \ref{algo:OverSketch}, to calculate an approximate matrix multiplication, and hence the Hessian, efficiently in serverless systems with inbuilt straggler resiliency.\footnote{We also evaluate the exact Hessian-based algorithm with speculative execution, i.e., recomputing the straggling jobs, and compare it with OverSketched Newton in Sec.~\ref{sec:experiments}.} 

\subsection{Softmax Regression using OverSketched Newton}
\label{sec:softmax}
We take unregulairzed softmax regression as an illustrative example for the weakly convex case. The goal is to find the weight matrix $\mathbf{W}=\left[\mathbf{w}_{1}, \cdots, \mathbf{w}_{K}\right]$ that fit the training data $\X \in \R^{d\times N}$ and $\y\in \R^{K\times N}$. Here $\w_i\in \R^d$ represesents the weight vector for the $k$-th class for all $i \in [1, K]$ and $K$ is the total number of classes. Hence, the resultant feature dimension for softmax regression is $dK$. 
The optimization problem is of the form
\begin{align}\label{softmax_objective}
    f(\W) = \sum_{n=1}^{N}\left[\sum_{k=1}^{K} y_{kn} \mathbf{w}_{k}^{T} \mathbf{x}_{n}-\log \sum_{l=1}^{K} \exp {\left(\mathbf{w}_{l}^{T} \mathbf{x}_{n}\right)}\right].
\end{align}
The gradient vector for the $i$-th class is given by
\begin{align}
\nabla f_i(\W) = \sum_{n=1}^{N}\left[\frac{\exp \left(\mathbf{w}_{i}^{T} \mathbf{x}_{n}\right)}{\sum_{l=1}^{K} \exp \left(\mathbf{w}_{l}^{T} \mathbf{x}_{n}\right)} -  y_{i n} \right]\mathbf{x}_{n} ~~~\forall ~i\in [1,k],
\end{align}
which can be written as matrix products $\pmb{\alpha_i} = \X^T\w_i$ and $\nabla f_i(\W) = \X\pmb\beta_i$, where the entries of $\pmb\beta_i \in \R^N$ are given by $\beta_{in} = \left(\frac{\exp (\alpha_{in})}{\sum_{l=1}^{K} \exp (\alpha_{ln})} - y_{in}\right)$. Thus, the full gradient matrix is given by 
$\nabla f(\W) = \X\pmb{\beta}$ where the entries of $\pmb{\beta} \in \R^{N\times K}$ are dependent on $\pmb{\alpha}\in \R^{N\times K}$ as above and the matrix $\pmb{\alpha}$ is given by $\pmb{\alpha} = \X^T\W$. We assume that the number of classes $K$ is small enough such that tall matrices $\pmb\alpha$ and $\pmb\beta$ are small enough for the master to do local calculations on them.

Since the effective number of features is $d\times K$, the Hessian matrix is of dimension $dK\times dK$. The $(i,j)$-th component of the Hessian, say $\h_{ij}$, is 
\begin{align}
\h_{ij}(\W) =\frac{d}{d \mathbf{w}_{j}} \nabla f_{i}(\mathbf{W})=\frac{d}{d \mathbf{w}_{j}} \mathbf{X}\pmb{\beta_i}=\mathbf{X} \frac{d}{d \mathbf{w}_{j}} \pmb\beta_i=\mathbf{X} \mathbf{Z_{ij}}{\X^T}
\end{align}
where $\mathbf Z_{ij} \in \R^{N\times N}$ is a diagonal matrix whose $n$-th diagonal entry is
\begin{align}
{Z}_{ij}(n) = \frac{\exp (\alpha_{in})}{\sum_{l=1}^{K} \exp (\alpha_{ln})}\left(\mathbb{I}(i=j) - \frac{\exp (\alpha_{jn})}{\sum_{l=1}^{K} \exp (\alpha_{ln})}\right)~\forall~n\in [1,N],
\end{align}
where $\mathbb{I}(\cdot)$ is the indicator function and $\pmb\alpha = \X\W$ was defined above. The full Hessian matrix is obtained by putting together all such $\h_{ij}$'s in a $dK\times dK$ matrix and can be expressed in a matrix-matrix multiplication form as
\begin{align}
\nabla^2 f(\W) = \left[ \begin{array}{ccc}{\mathbf{H}_{11}} & {\cdots} & {\mathbf{H}_{1 K}} \\ {\vdots} & {\ddots} & {\vdots} \\ {\mathbf{H}_{K 1}} & {\cdots} & {\mathbf{H}_{K K}}\end{array}\right]
= \left[ \begin{array}{ccc}{\X\mathbf{Z}_{11}}\X^T & {\cdots} & \X{\mathbf{Z}_{1 K}}\X^T \\ {\vdots} & {\ddots} & {\vdots} \\ \X{\mathbf{Z}_{K 1}}\X^T & {\cdots} & \X{\mathbf{Z}_{K K}\X^T}\end{array}\right] = \bar\X \bar\Z \bar\X^T,
\end{align}
where $\bar\X \in\R^{dK\times NK}$ is a block diagonal matrix that contains $\X$ in the diagonal blocks and $\bar{\Z} \in \R^{NK\times NK}$ is formed by stacking all the $\Z_{ij}$'s for $i,j \in [1,K]$.
In OverSketched Newton, we compute this multiplication using sketching in serverless systems for efficiency and resiliency to stragglers. Assuming $d\times K$ is small enough, the master can then calculate the update $\p_t$ using efficient algorithms such the minimum-residual method \cite{mr,newton_mr}.

\subsection{Other Example Problems}
In this section, we describe several other commonly encountered optimization problems that can be solved using OverSketched Newton. 

\textbf{Ridge Regularized Linear Regression}: The optimization problem is
\begin{equation}
\min_{\w\in \R^d} ~ \frac{1}{2n}||\X^T\w - \y||_2^2+ \frac{\lambda}{2}\|\w\|^2_2.\label{linear_reg}    
\end{equation}
The gradient in this case can be written as $\frac{1}{n}\X(\pmb\beta - \y) + \lambda\w$, where $\pmb\beta = \X^T\w$, where the training matrix $\X$ and label vector $\y$ were defined previously. The Hessian is given by $\nabla^2 f(\w) = \X\X^T + \lambda\I$. For $n\gg d$, this can be computed approximately using the sketch matrix in \eqref{sketch_matrix}.
 
\textbf{Linear programming via interior point methods}: The following linear program can be solved using OverSketched Newton
\begin{align}\label{lp}
\underset{\A\x\leq \bb}{\text{minimize}}~ \mathbf{c}^T\x,
\end{align}
where $\x \in \R^{m\times 1}, \mathbf{c} \in \R^{m\times 1}, \bb \in \R^{n\times 1}$ and $\A\in \R^{n\times m}$ is the constraint matrix with $n>m$. 
In algorithms based on interior point methods, the following sequence of problems using Newton's method
\begin{equation}\label{int_point}
\min_{\x\in \R^m}  f(\x) = \min_{\x\in \R^m} \left(\tau \textbf{c}^T\x - \sum_{i=1}^n\log(b_i - \ba_i\x)\right),
\end{equation}
where $\ba_i$ is the $i$-th row of $\A$, $\tau$ is increased geometrically such that when $\tau$ is very large, the logarithmic term does not affect the objective value and serves its purpose of keeping all intermediates solution inside the constraint region. The update in the $t$-th iteration is given by 
$\x_{t+1} = \x_t - (\nabla^2f(\x_t))^{-1}\nabla f(\x_t)$,
where $\x_t$ is the estimate of the solution in the $t$-th iteration. The gradient can be written as $\nabla f(\x) = \tau\textbf{c} + \A^T\pmb\beta$ where $\beta_i = 1/(b_i - \alpha_i)$ and $\pmb\alpha = \A\x$.

The Hessian for the objective in \eqref{int_point} is given by
\begin{eqnarray}
\nabla^2f(\x) = \A^T\text{diag}\frac{1}{(b_i - \alpha_i)^2}\A.
\end{eqnarray} 
The square root of the Hessian is given by $\nabla^2f(\x)^{1/2} = \text{diag}\frac{1}{|bi - \alpha_i|}\A$. The  computation of Hessian requires $O(nm^2)$ time and is the bottleneck in each iteration. Thus, we can use sketching to mitigate stragglers while evaluating the Hessian efficiently, i.e. $\nabla^2f(\x) \approx (\s\nabla^2f(\x)^{1/2})^T\times (\s\nabla^2f(\x)^{1/2})$, where $\s$ is the OverSketch matrix defined in \eqref{sketch_matrix}. 

\textbf{Lasso Regularized Linear Regression}: The optimization problem takes the following form
\begin{equation}\label{lasso}
\min_{\w\in\R^d}~\frac{1}{2}||\X\w - \y||_2^2 + \lambda ||\w||_1,
\end{equation}
where $\X\in \R^{n\times d}$ is the measurement matrix, the vector $\y\in \R^n$ contains the measurements, $\lambda\geq 0$ and $d\gg n$. To solve \eqref{lasso}, we consider its dual variation
$$\min_{\substack{||\X^T\z||_\infty \leq \lambda, \z\in \R^n}} \frac{1}{2}||\y - \z||_2^2,$$
which is amenable to interior point methods and can be solved by optimizing the following sequence of problems where $\tau$ is increased geometrically
$$
\min_{\z}f(\z) = \min_{\z} \Big(\frac{\tau}{2}||\y - \z||_2^2 - \sum_{j=1}^d\log(\lambda - \x_j^T\z)- \sum_{j=1}^d(\lambda + \x_j^T\z)\Big),$$
where $\x_j$ is the $j$-th column of $\X$.
The gradient can be expressed in few matrix-vector multiplications as $\nabla f(\z) = \tau(\z - \y) + \X(\pmb\beta- \pmb\gamma),$ where $\beta_i = 1/(\lambda - \alpha_i)$,  $\gamma_i = 1/(\lambda + \alpha_i)$,  and $\pmb\alpha = \X^T\z$. Similarly, the Hessian can be written as  $\nabla^2 f(\z) = \tau\I + \X\pmb\Lambda\X^T$, where $\pmb\Lambda$ is a diagonal matrix whose entries are given by $\Lambda_{ii} = 1/(\lambda - \alpha_i)^2 + 1/(\lambda + \alpha_i)^2~\forall~ i\in [1,n]$.


Other common problems where OverSketched Newton is applicable include Linear Regression, Support Vector Machines (SVMs), Semidefinite programs, etc.


\section{Experimental Results}
\label{sec:experiments}
In this section, we evaluate OverSketched Newton on AWS Lambda using real-world and synthetic datasets, and we compare it with state-of-the-art distributed optimization algorithms\footnote{A working implementation of OverSketched Newton is available at https://github.com/vvipgupta/OverSketchedNewton}. We use the serverless computing framework, Pywren \cite{pywren}.
Our experiments are focused on logistic and softmax regression, which are popular supervised learning problems, but they can be reproduced for other problems described in Section \ref{examples_probs}.
We present experiments on the following datasets:

 \begin{table}[h] 
 \centering 
 \begin{tabular}{cccc} 
\multicolumn{1}{c}{\bf Dataset}  &
\multicolumn{1}{c}{\bf Training Samples} &
\multicolumn{1}{c}{\bf Features} &
\multicolumn{1}{c}{\bf Testing samples}  \\
 \hline  
 Synthetic & $300,000$ & $3000$ & $100,000$\\
 EPSILON & $400,000$ & $2000$ & $100,000$ \\ 
 WEBPAGE & $48,000$ & $300$ & $15,000$ \\ 
 a9a & $32,000$ & $123$ & $16,000$ \\ 
 EMNIST & $240,000$ & $7840$ & $40,000$   \\ 
 \hline 
 \end{tabular} 
 \end{table}


 \begin{figure}[t]
 \centering
 \includegraphics[width=0.5\textwidth]{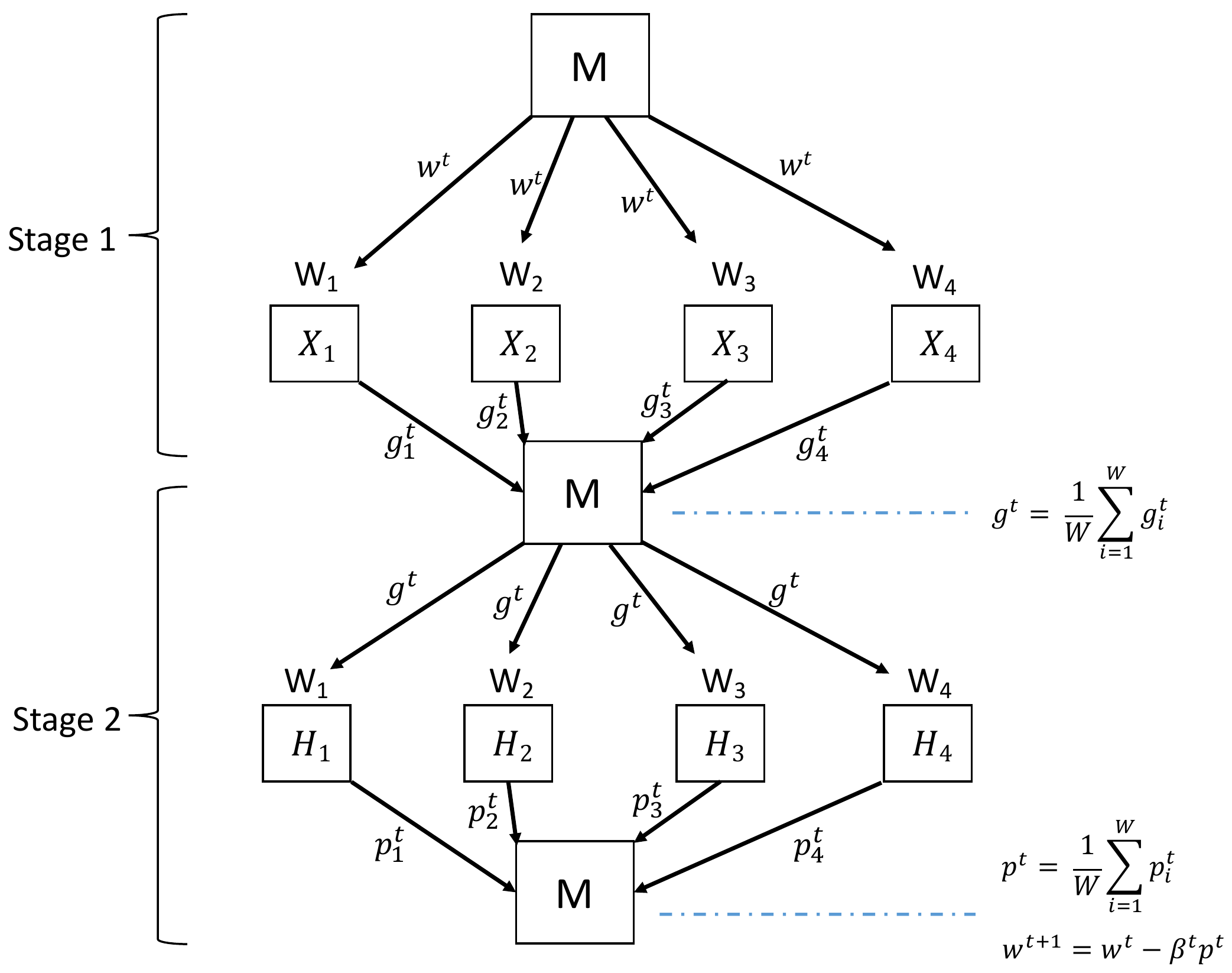}
 \caption{GIANT: The two stage second order distributed optimization scheme with four workers. First, master calculates the full gradient by aggregating local gradients from workers. Second, the master calculates approximate Hessian using local second-order updates from workers.}
 \label{fig:giant}
 \end{figure}

\begin{figure}[t!]
    \centering
    \begin{subfigure}[t]{0.30\textwidth}
        \centering
        \includegraphics[scale=0.14]{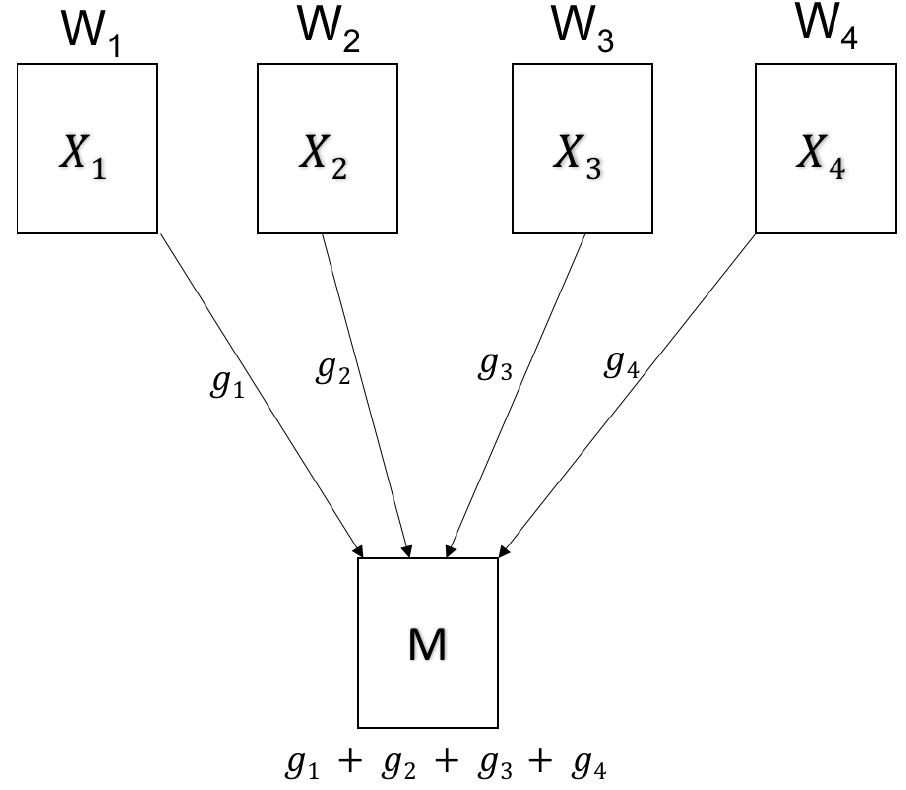}
        \caption{\small Simple Gradient Descent where each worker stores one-fourth fraction of the whole data and sends back a partial gradient corresponding to its own data to the master}
        \label{fig:grad_descent}
    \end{subfigure}%
    ~~ 
    \begin{subfigure}[t]{0.30\textwidth}
        \centering
        \includegraphics[scale=0.14]{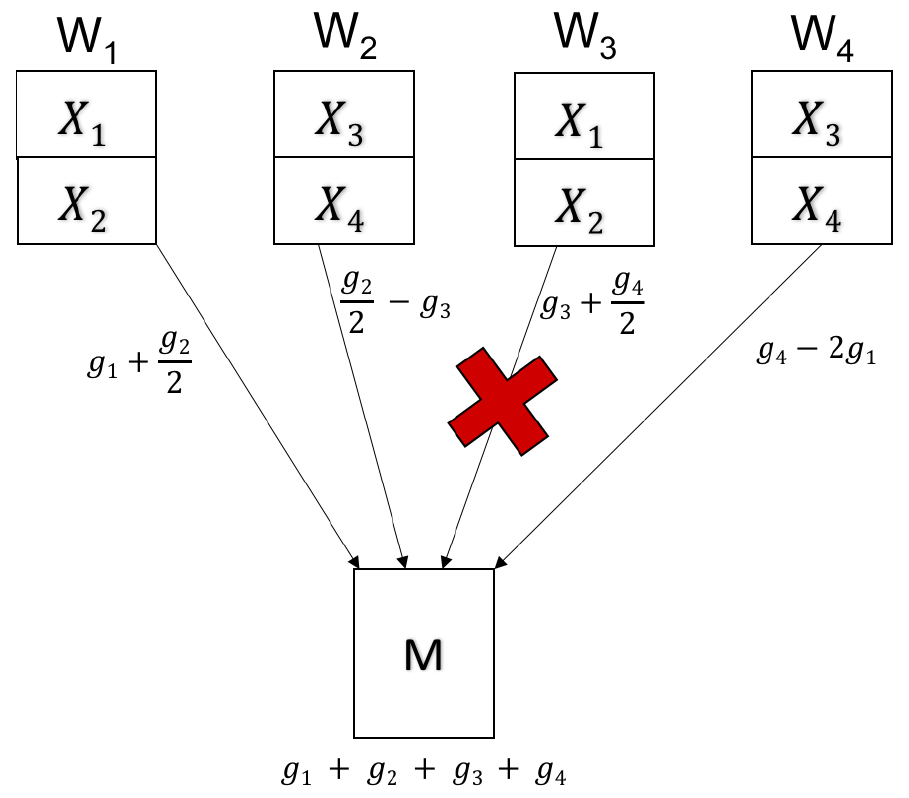}
        \caption{\small Gradient Coding described in \cite{grad_coding} with $W_3$ straggling. To get the global gradient, master would compute $g_1 + g_2 + g_3 + g_4 = 3\left(g_1 + \frac{g_2}{2}\right) - \left(\frac{g2}{2} - g_3\right) + (g_4 - 2g_1)$}
        \label{fig:grad_coding}
    \end{subfigure}
    ~~
    \begin{subfigure}[t]{0.30\textwidth}
        \centering
        \includegraphics[scale=0.31]{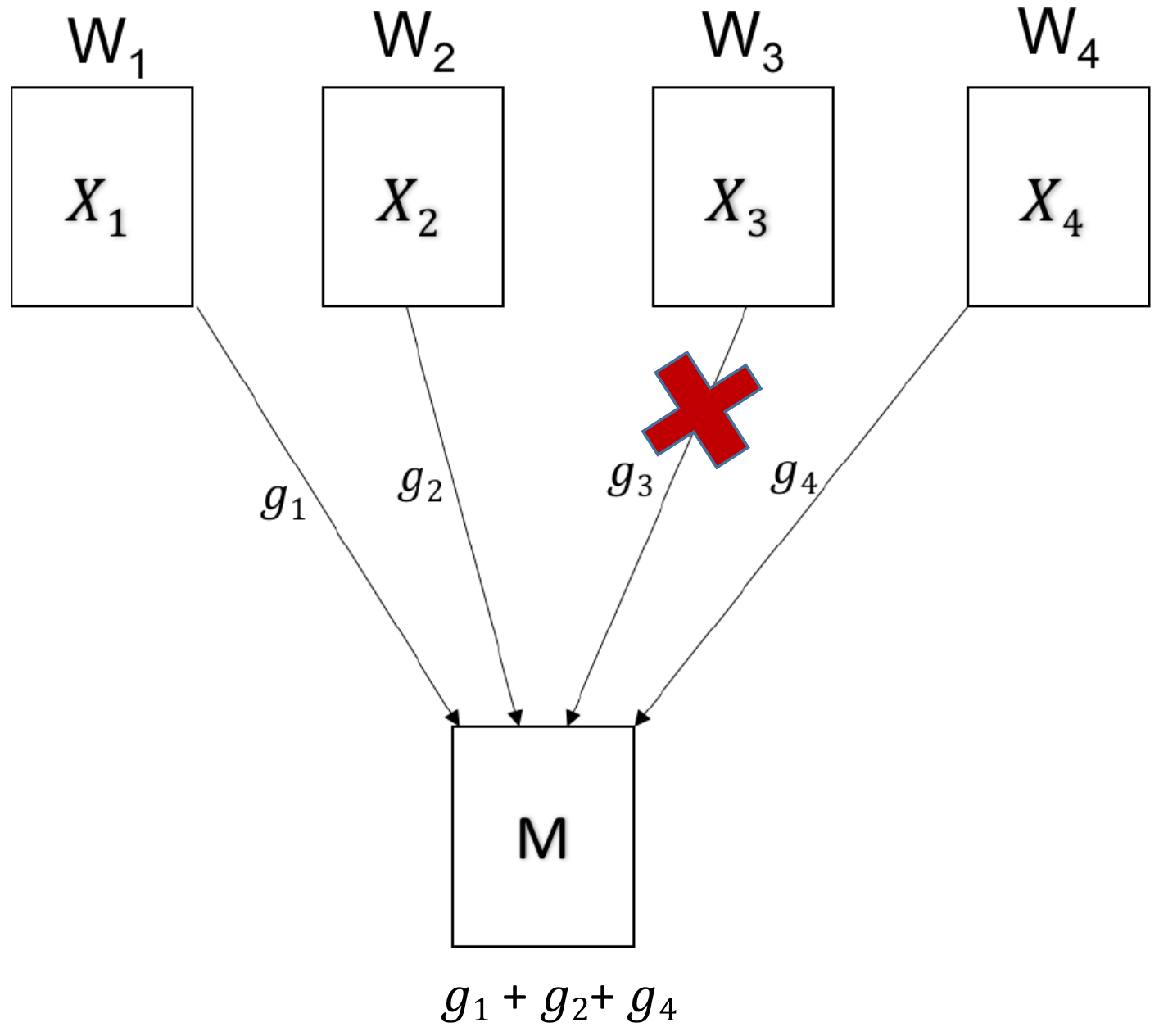}
        \caption{\small Mini-batch gradient descent, where the stragglers are ignored during gradient aggregation and the gradient is later scaled according to the size of mini-batch}
        \label{fig:stoc_grad_coding}
    \end{subfigure}
    \caption{Different gradient descent schemes in serverful systems in presence of stragglers}
    \label{fig:grad_descent_all}
\end{figure}

For comparison of OverSketched Newton with existing distributed optimization schemes, we choose recently-proposed Globally Improved Approximate Newton Direction (GIANT) \cite{giant}. 
The reason is that GIANT boasts a better convergence rate than many existing distributed second-order methods for linear and logistic regression, when $n\gg d$. 
In GIANT, and other similar distributed second-order algorithms, the training data is evenly divided among workers, and the algorithms proceed in two stages. First, the workers compute partial gradients using local training data, which is then aggregated by the master to compute the exact gradient. Second, the workers receive the full gradient to calculate their local second-order estimate, which is then averaged by the master. 
An illustration is shown in Fig. \ref{fig:giant}. 

\begin{figure}
\centering
        \includegraphics[width=0.46\textwidth]{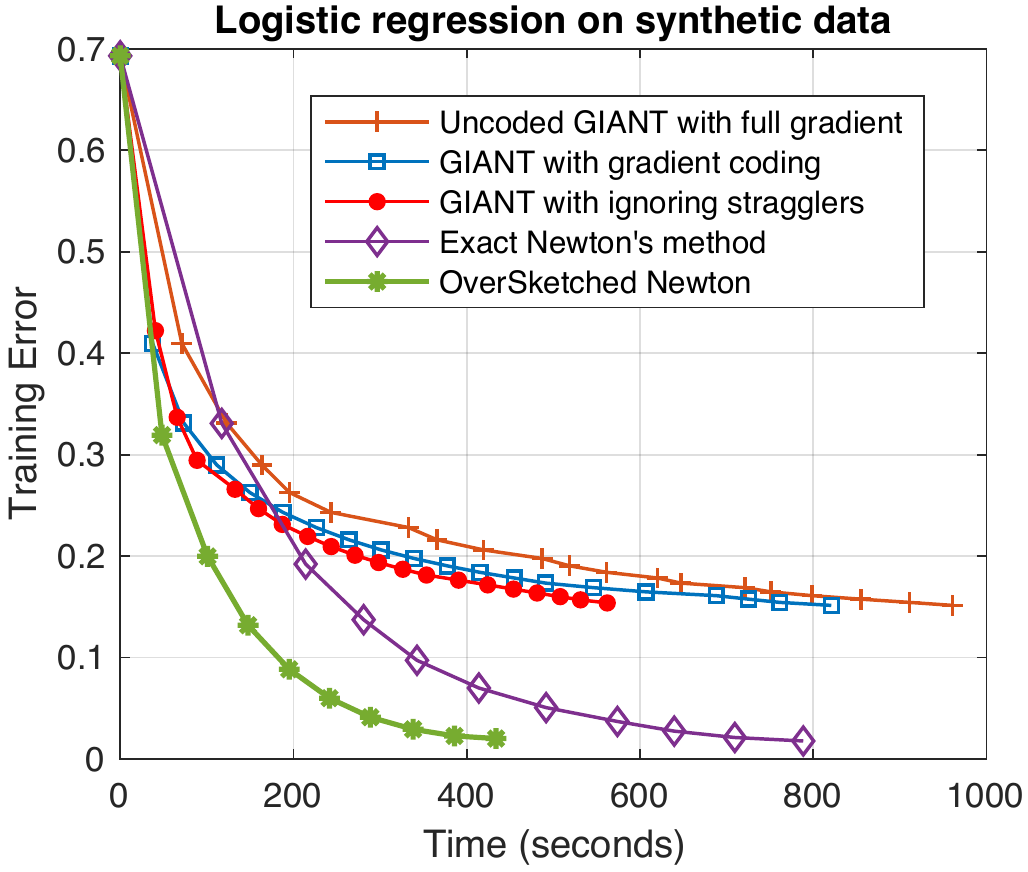}
    \caption{Convergence comparison of GIANT (employed with different straggler mitigation methods), exact Newton's method and OverSketched Newton for Logistic regression on AWS Lambda. The synthetic dataset considered has 300,000 examples and 3000 features.}
\label{fig:second_order_synthetic}
\end{figure}

For straggler mitigation in such serverful systems based algorithms, \cite{grad_coding} proposes a scheme for coding gradient updates called \emph{gradient coding}, where the data at each worker is repeated multiple times to compute redundant copies of the gradient. See Figure \ref{fig:grad_coding} for illustration. Figure \ref{fig:grad_descent} illustrates the scheme that waits for all workers and Figure \ref{fig:stoc_grad_coding} illustrates the ignoring stragglers approach. We use the three schemes for dealing with stragglers illustrated in Figure \ref{fig:grad_descent_all} during the two stages of GIANT, and we compare their convergence with OverSketched Newton.
We further evaluate and compare the convergence exact Newton's method (employed with speculative execution, that is, reassigning and recomputing the work for straggling workers).

\subsection{Comparisons with Existing Second-Order Methods on AWS Lambda}

In Figure \ref{fig:second_order_synthetic}, we present our results on a synthetic dataset with $n = 300,000$ and $d=3000$ for logistic regression on AWS Lambda. 
Each column $\x_i \in \R^d$, for all $i\in [1,n]$, is sampled uniformly randomly from the cube $[-1,1]^d$. 
The labels $y_i$ are sampled from the logistic model, that is, $\mathbb{P}[y_i=1] = 1/(1 + exp(\x_i\w + b))$, where the weight vector $\w$ and bias $b$ are generated randomly from the normal distribution.  ht vector $\w$ and bias $b$ are generated randomly from the normal distribution.    

The orange, blue and red curves demonstrate the convergence for GIANT with the full gradient (that waits for all the workers), {gradient coding} and mini-batch gradient (that ignores the stragglers while calculating gradient and second-order updates) schemes, respectively. The purple and green curves depict the convergence for the exact Newton's method and OverSketched Newton, respectively. The gradient coding scheme is applied for one straggler, that is the data is repeated twice at each worker. We use 60 Lambda workers for executing GIANT in parallel. Similarly, for Newton's method, we use 60 workers for matrix-vector multiplication in steps 4 and 8 of Algorithm \ref{algo:logistic_lambda}, $3600$ workers for exact Hessian computation and $600$ workers for sketched Hessian computation with a sketch dimension of $10d = 30,000$ in step 14 of Algorithm \ref{algo:logistic_lambda}. In all cases, unit step-size was used to update the model\footnote{Line-search in Section \ref{sec:osn} was mainly introduced to prove theoretical guarantees. In our experiments, we observe that constant step-size works well for OverSketched Newton.}

\begin{remark}
\label{rem:no-of-workers}
In our experiments, we choose the number of workers in such a way that each worker receives approximately the same amount of data to work with, regardless of the algorithm. This is motivated by the fact that the memory at each worker is the bottleneck in serverless systems (e.g., in AWS Lambda, the memory at each worker can be as low as 128 MB). Note that this is unlike serverful/HPC systems, where the number of workers is the bottleneck. 
\end{remark}

An important point to note from Fig. \ref{fig:second_order_synthetic} is that the uncoded scheme (that is, the one that waits for all stragglers) has the worst performance. 
 The implication is that good straggler/fault mitigation algorithms are essential for computing in the serverless setting. Secondly, the mini-batch scheme outperforms the gradient coding scheme by $25\%$. This is because gradient coding requires additional communication of data to serverless workers (twice when coding for one straggler, see \cite{grad_coding} for details) at each invocation to AWS Lambda. 
On the other hand, the exact Newton's method converges much faster than GIANT, even though it requires more time per iteration. 

The number of iterations needed for convergence for OverSketched Newton and exact Newton (that exactly computes the Hessian) is similar, but OverSketched Newton converges in almost half the time due to an efficient computation of (approximate) Hessian (which is the computational bottleneck and thus reduces time per iteration). 

\begin{figure*}[t]
    \centering
    \begin{subfigure}[t]{0.47\textwidth}
        \centering
        \includegraphics[scale=0.5]{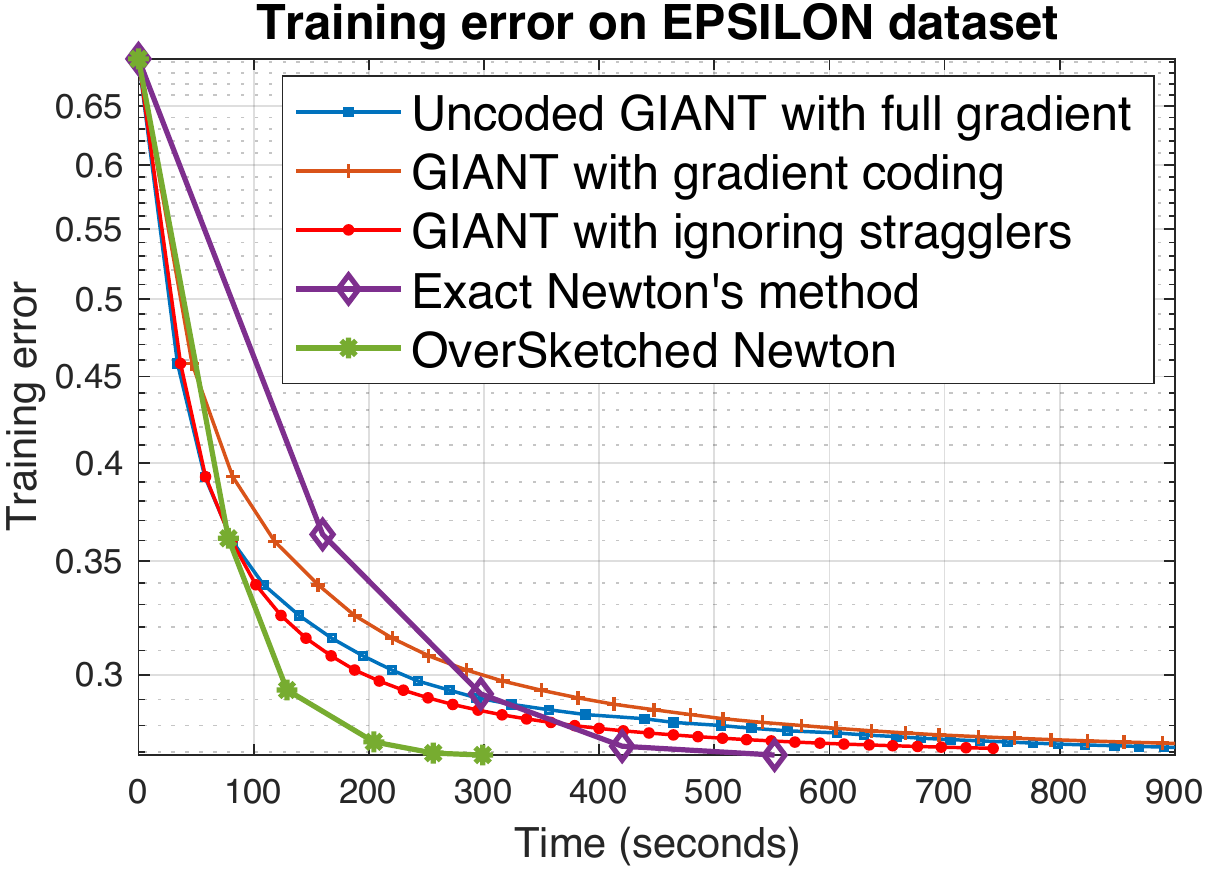}
        \caption{Training error for logistic regression on EPSILON dataset}
    \end{subfigure}
    ~~~~~
    \begin{subfigure}[t]{0.47\textwidth}
        \centering
        \includegraphics[scale=0.5]{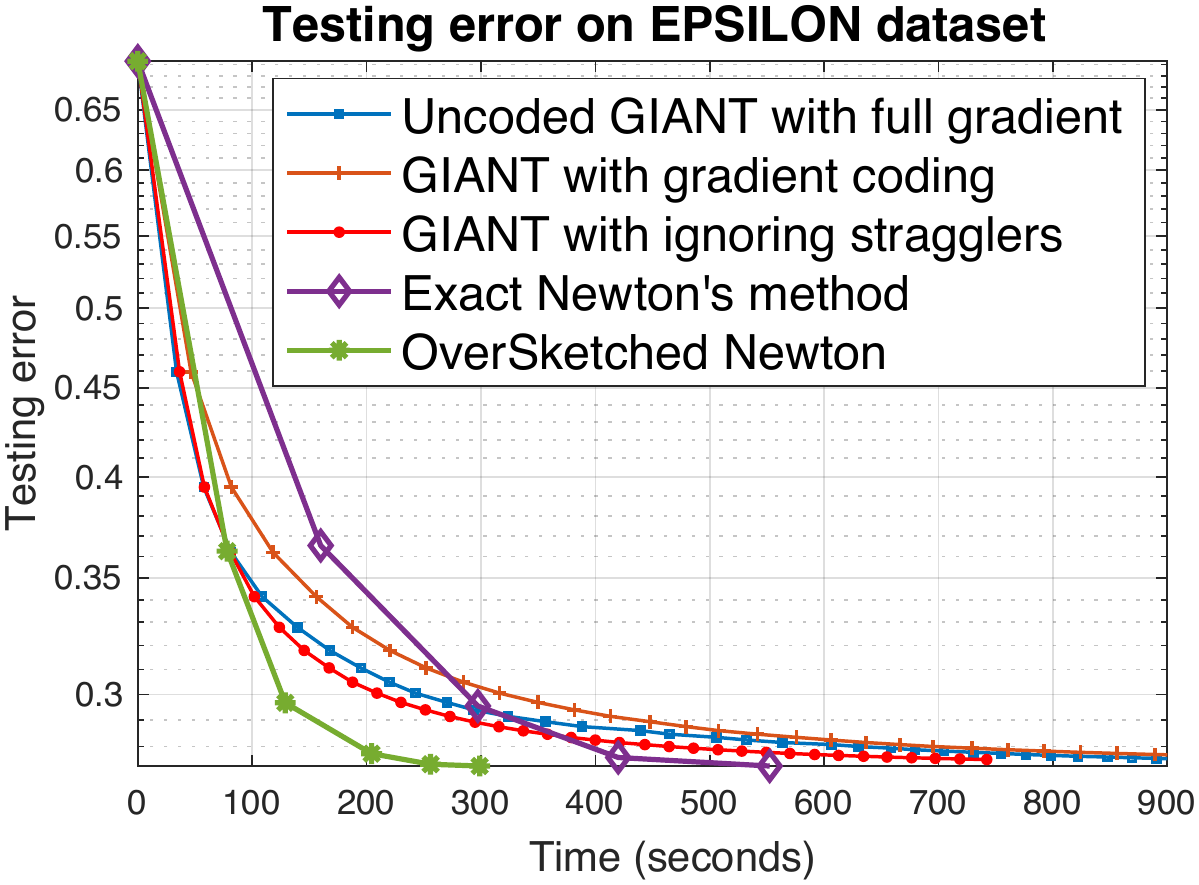}
        \caption{Testing error for logistic regression on EPSILON dataset}
    \end{subfigure}
    \caption{{Comparison of training and testing errors for logistic regression on EPSILON dataset with several Newton based schemes on AWS Lambda. OverSketched Newton outperforms others by at least $46\%$. Testing error closely follows training error.}}
\label{fig:second_order_epsilon}
\end{figure*} 


\begin{figure*}[t]
    \centering
    \begin{subfigure}[t]{0.47\textwidth}
        \centering
        \includegraphics[scale=0.5]{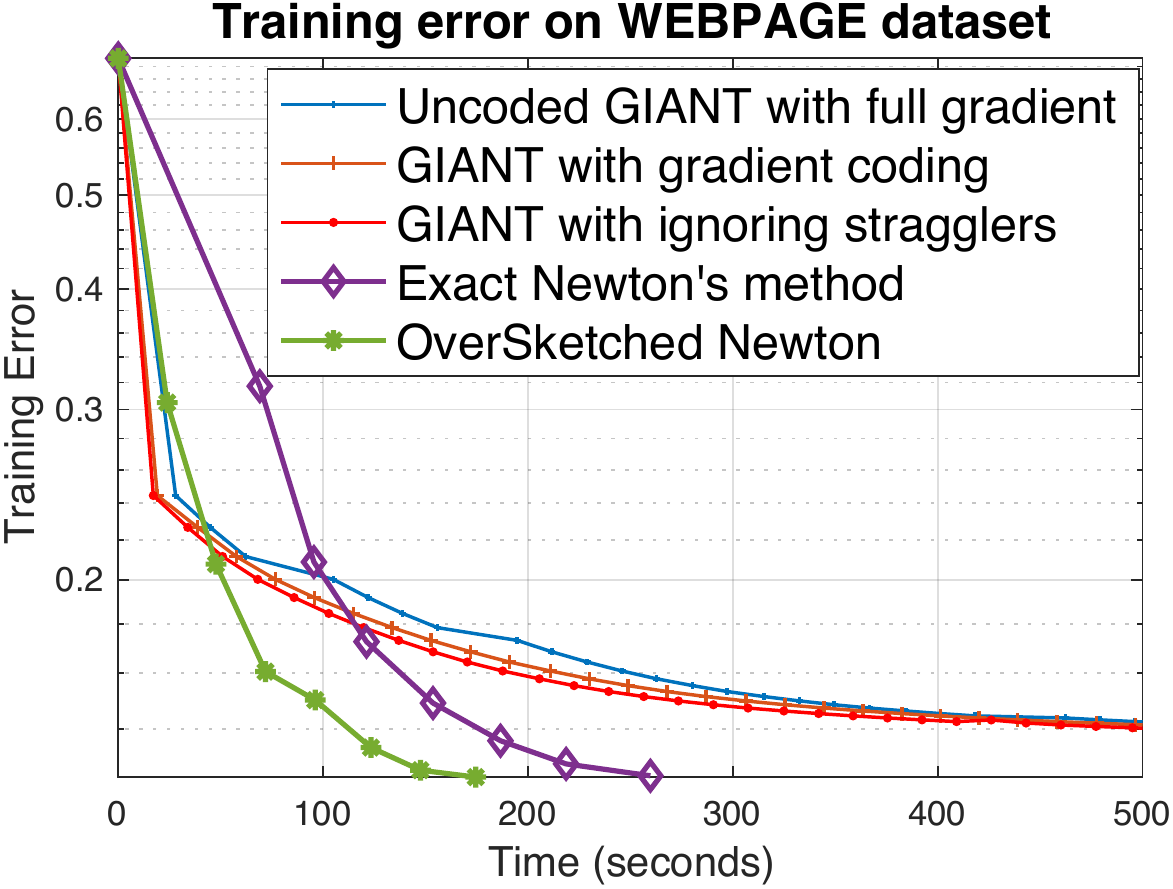}
        \caption{Logistic regression on WEBPAGE dataset}
    \end{subfigure}
    ~~~~~
    \begin{subfigure}[t]{0.47\textwidth}
        \centering
        \includegraphics[scale=0.65]{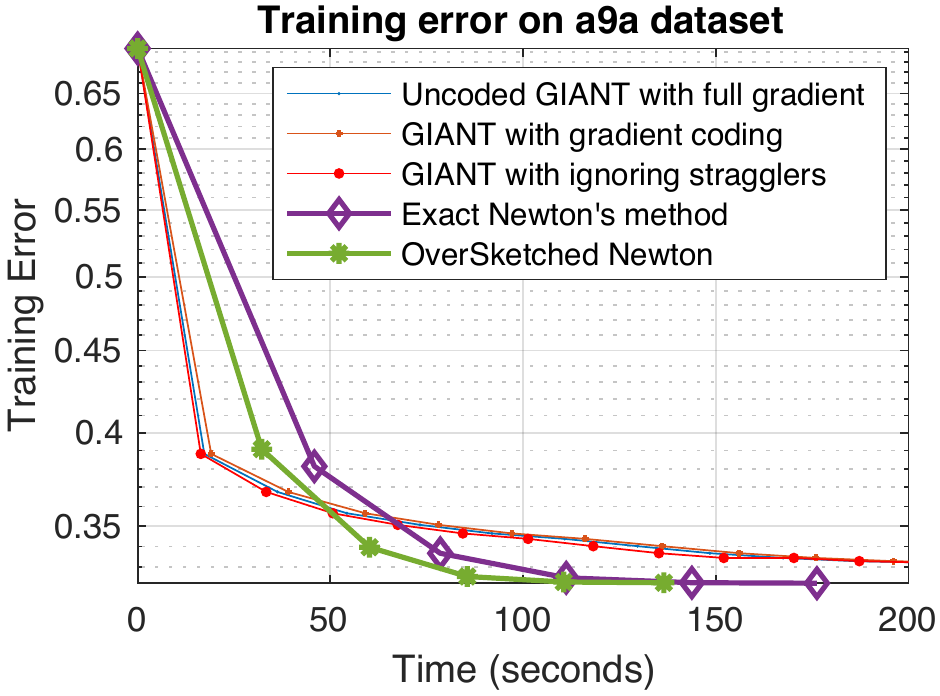}
        \caption{Logistic regression on a9a dataset}
    \end{subfigure}
    \caption{{Logistic regression on WEBPAGE and a9a datasets with several Newton based schemes on AWS Lambda. OverSketched Newton outperforms others by at least $25\%$.}}
\label{fig:second_order_web}
\end{figure*} 

\subsubsection{Logistic Regression on EPSILON, WEBPAGE and a9a Datasets}
In Figure \ref{fig:second_order_epsilon}, we repeat the above experiment with EPSILON classification dataset obtained from \cite{libsvm}, with $n=0.4~million$ and $d=2000$. We plot training and testing errors for logistic regression for the schemes described in the previous section. 
Here, we use $100$ workers for GIANT, and $100$ workers for matrix-vector multiplications for gradient calculation in OverSketched Newton. 
We use gradient coding designed for three stragglers in GIANT. This scheme performs worse than uncoded GIANT that waits for all the stragglers due to the repetition of training data at workers. Hence, one can conclude that the communication costs dominate the straggling costs. In fact, it can be observed that the mini-batch gradient scheme that ignores the stragglers outperforms the gradient coding and uncoded schemes for GIANT. 

During exact Hessian computation, we use $10,000$ serverless workers with speculative execution to mitigate stragglers (i.e., recomputing the straggling jobs) compared to
OverSketched Newton that uses $1500$ workers with a sketch dimension of $15d = 30,000$. 
OverSketched Newton requires a significantly smaller number of workers, as once the square root of Hessian is sketched in a distributed fashion, it can be copied into local memory of the master due to dimension reduction, and the Hessian can be calculated locally.
Testing error follows training error closely, and important conclusions remain the same as in Figure \ref{fig:second_order_synthetic}. OverSketched Newton outperforms GIANT and exact Newton-based optimization by at least $46\%$ in terms of running time.

We repeated the above experiments for classification on the WEBPAGE ($n=49,749$ and $d=300$) and a9a ($n=32,561$ and $d=123$) datasets \cite{libsvm}. For both datasets, we used 30 workers for each iteration in GIANT and any matrix-vector multiplications. Exact hessian calculation invokes $900$ workers as opposed to $300$ workers for OverSketched Newton, where the sketch dimension was $10d = 3000$. 
The results for training loss on logistic regression are shown in Figure  \ref{fig:second_order_web}. Testing error closely follows the training error in both cases. OverSketched Newton outperforms exact Newton and GIANT by at least $\sim 25\%$ and $\sim 75\%$, respectively, which is similar to the trends witnessed heretofore.

\begin{remark}
Note that conventional distributed second-order methods for serverful systems---which distribute training examples evenly across workers (such as~\cite{dane, aide, disco, giant,jaggi18,cocoa})---typically find a ``localized approximation'' (localized to each machine) of second-order update at each worker and then aggregate it. OverSketched Newton, on the other hand, uses the massive storage and compute power in serverless systems to find a more ``globalized approximation'' (globalized in the sense of across machine).  Thus, it performs better in practice.
\end{remark}

\begin{figure}[t!]

\begin{minipage}{0.48\textwidth}
\centering
        \includegraphics[scale=0.5]{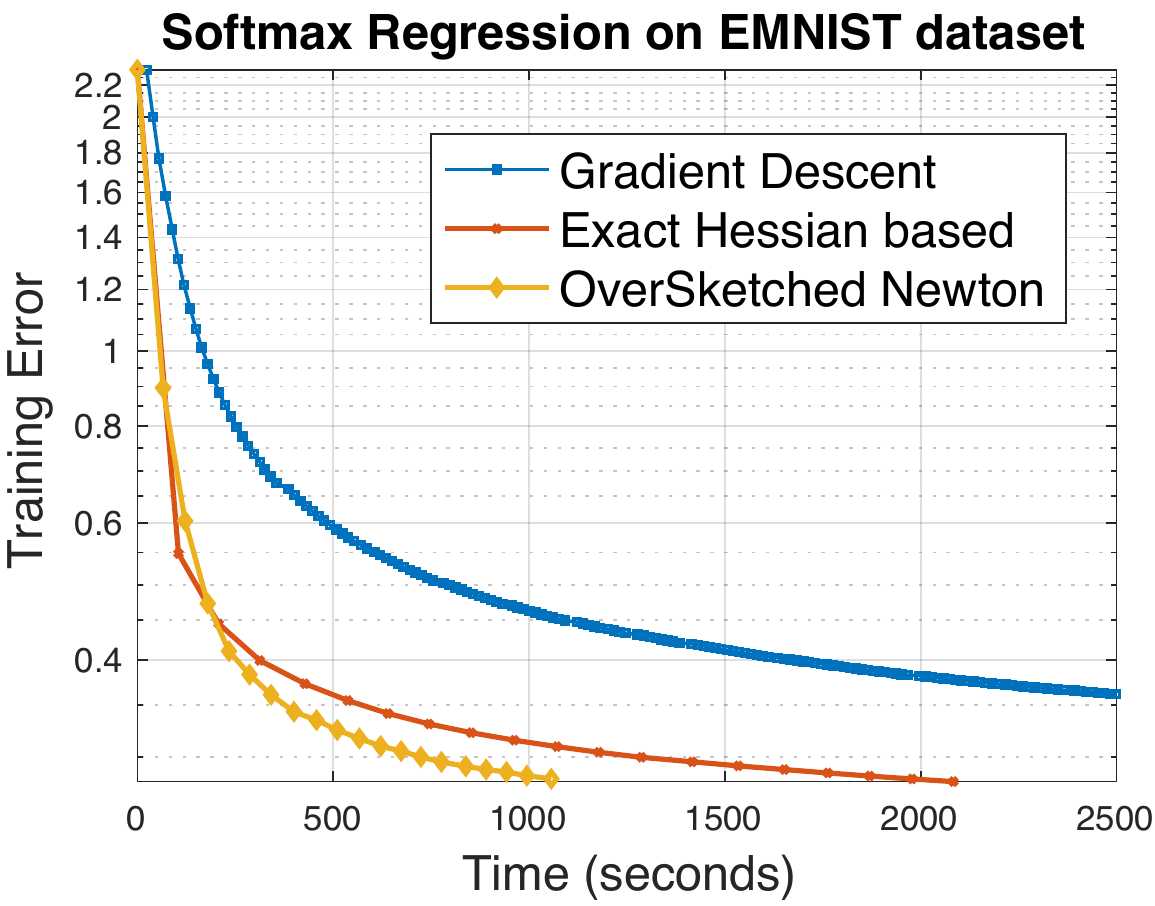}
    \caption{{Convergence comparison of gradient descent, exact Newton's method and OverSketched Newton for Softmax regression on AWS Lambda.}}
\label{fig:softmax}
\end{minipage}
~
\begin{minipage}{0.48\textwidth}
        \centering
        \includegraphics[scale=0.5]{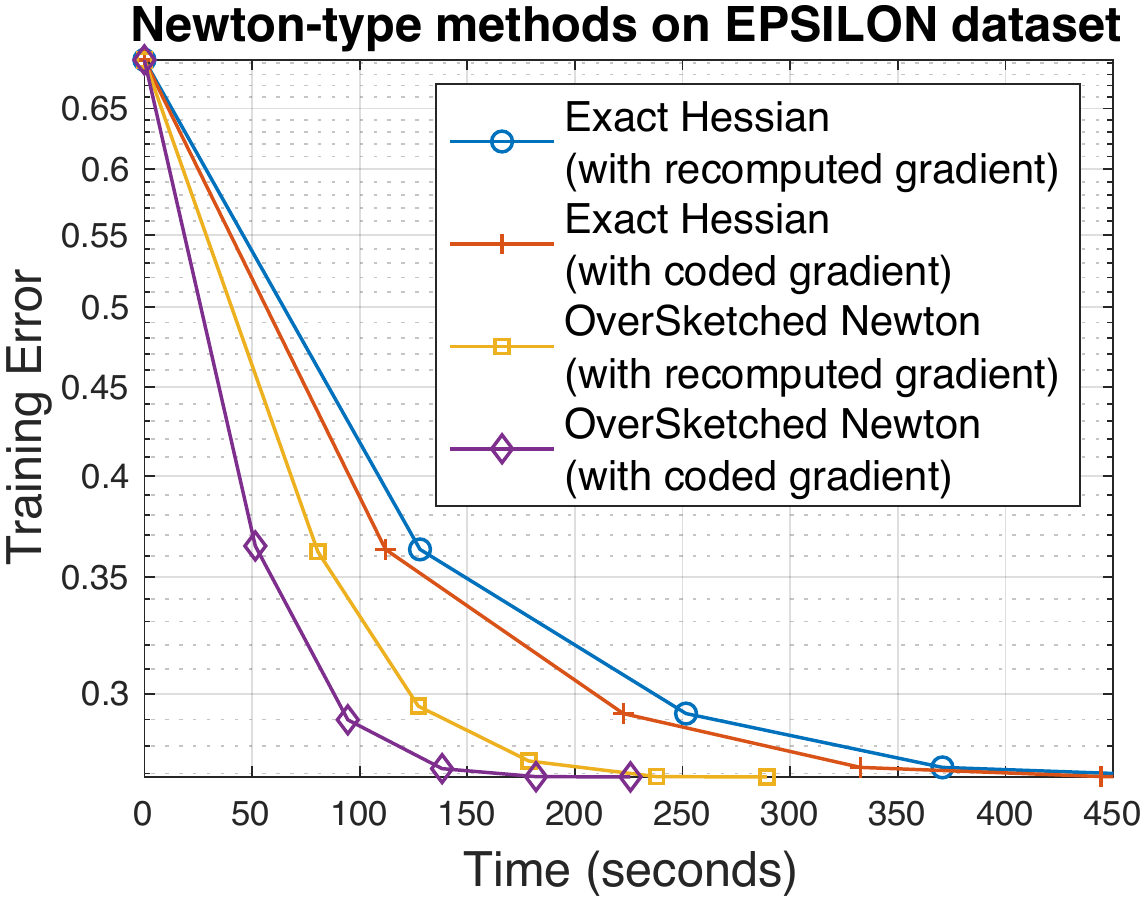}
        \caption{{Convergence comparison of speculative execution and coded computing for gradient and Hessian computing with logistic regression on AWS Lambda.
        }}
    \label{fig:coded_vs_recompute}
    \end{minipage}

    \begin{minipage}{0.48\textwidth}
        \centering
        \includegraphics[scale=0.5]{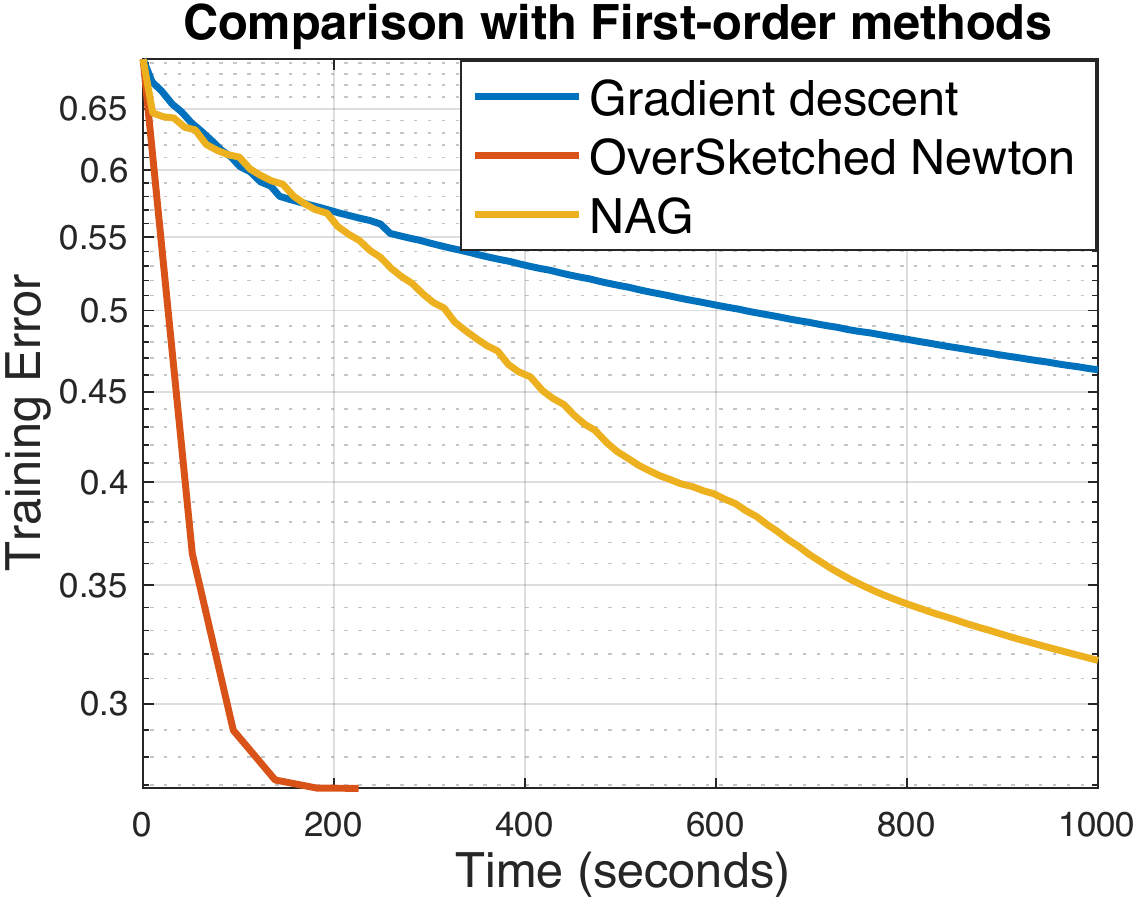}
    \caption{{Convergence comparison of gradient descent, NAG and OverSketched Newton on AWS Lambda.}}
\label{fig:grad_vs_cgosh}
\end{minipage}
~
\begin{minipage}{0.48\textwidth}
    \centering
        \includegraphics[scale=0.53]{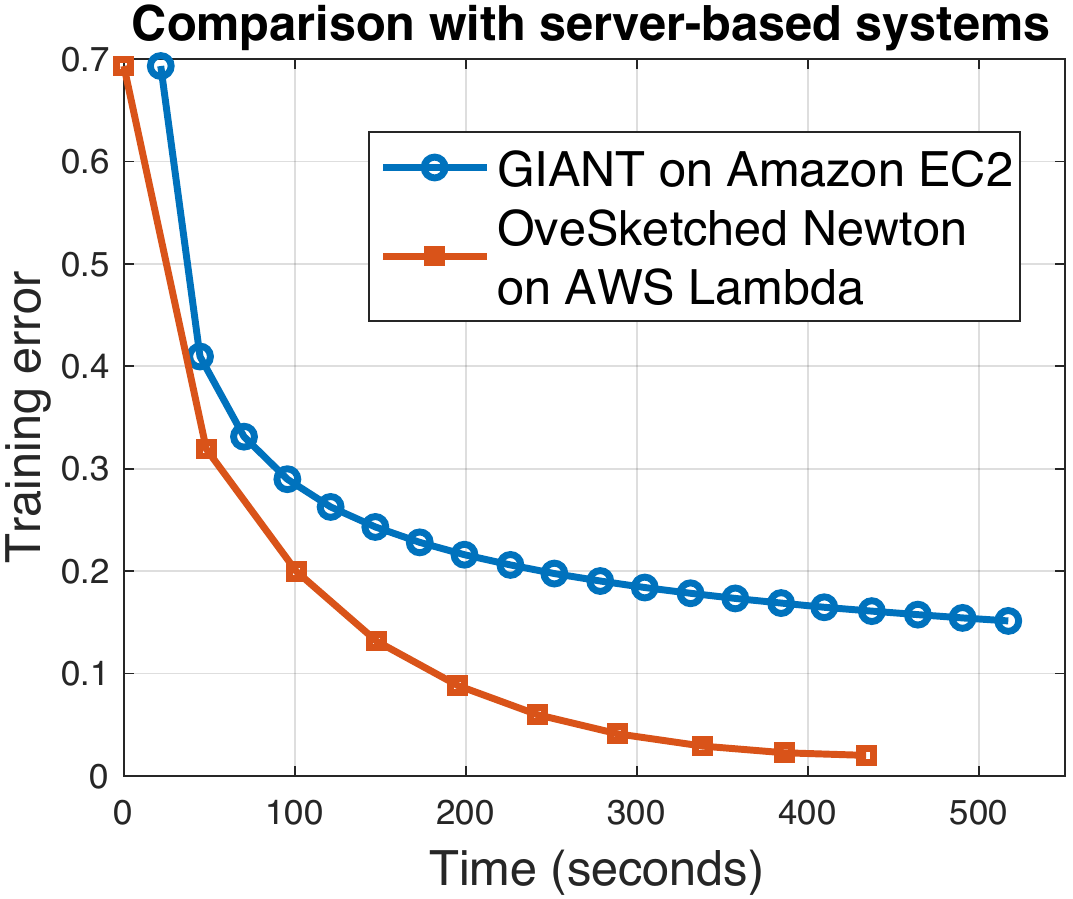}
    \caption{{ Convergence comparison of GIANT on AWS EC2 and OverSketched Newton on AWS Lambda.}}
\label{fig:mpi_vs_lambda}
\end{minipage}
\end{figure} 

\subsection{Softmax Regression on EMIST}



In Fig. \ref{fig:softmax}, we solve unregularized softmax regression, which is weakly convex (see Sec. \ref{sec:softmax} for details). We use the Extended MNIST (EMNIST) dataset \cite{emnist} with $N = 240,000$ training examples, $d = 784$ features and $K=10$ classes. Note that GIANT cannot be applied here as the objective function is not strongly convex. We compare the convergence rate of OverSketched Newton, exact Hessian and gradient descent based schemes. 

For gradient computation in all three schemes, we use 60 workers. However, exact Newton scheme requires 3600 workers to calculate the $dK\times dK$ Hessian and recomputes the straggling jobs, while OverSketched Newton requires only 360 workers to calculate the sketch in parallel with sketch dimension $6dK = 47,040$.
The approximate Hessian is then computed locally at the master using its sketched square root, where the sketch dimension is $6dK = 47,040$. 
The step-size is fixed and is determined by hyperparamter tuning before the start of the algorithm.
Even for the weakly-convex case, second-order methods tend to perform better.
Moreover, the runtime of OverSketched Newton outperforms both gradient descent and Exact Newton based methods by $\sim75\%$ and $\sim50\%$, respectively. 


\subsection{Coded computing versus Speculative Execution}


In Figure \ref{fig:coded_vs_recompute}, we compare the effect of straggler mitigation schemes, namely speculative execution, that is, restarting the jobs with straggling workers, and coded computing on the convergence rate during training and testing. 
We regard OverSketch based matrix multiplication as a coding scheme in which some redundancy is introduced during ``over'' sketching for matrix multiplication. There are four different cases, corresponding to gradient and hessian calculation using either speculative execution or coded computing.
For speculative execution, we wait for at least $90\%$ of the workers to return (this works well as the number of stragglers is generally less than $10\%$) and restart the jobs that did not return till this point. 

For both exact Hessian and OverSketched Newton, using codes for distributed gradient computation outperforms speculative execution based straggler mitigation. 
Moreover, computing the Hessian using OverSketch is significantly better than exact computation in terms of running time as calculating the Hessian is the computational bottleneck in each iteration. 

\subsection{Comparison with First-Order Methods on AWS Lambda}

In Figure \ref{fig:grad_vs_cgosh}, we compare gradient descent and Nesterov Accelerated Gradient (NAG) (while ignoring the stragglers) with OverSketched Newton for logistic regression on EPSILON dataset. 
We observed that for first-order methods, there is only a slight difference in convergence for a mini-batch gradient when the batch size is $95\%$. 
Hence, for gradient descent and NAG, we use 100 workers in each iteration while ignoring the stragglers.%
\footnote{We note that stochastic methods such as SGD perform worse that gradient descent since their update quality is poor, requiring more iterations (hence, more communication) to converge while not using the massive compute power of serverless. For example, 20\% minibatch SGD in the setup of Fig. \ref{fig:grad_vs_cgosh} requires $1.9\times$ more time than gradient descent with same number of workers.} 
These first-order methods were given the additional advantage of backtracking line-search, which determined the optimal amount to move in given a descent direction.%
\footnote{We remark that backtracking line-search required $\sim 13\%$ of the total time for NAG. Hence, as can be seen from Fig. \ref{fig:grad_vs_cgosh}, any well-tuned step-size method would still be significantly slower than OverSketched Newton.} 
Overall, OverSketched Newton with unit step-size significantly outperforms gradient descent and NAG with backtracking line-search.


\subsection{Comparison with Serverful Optimization}

In Fig. \ref{fig:mpi_vs_lambda}, we compare OverSketched Newton on AWS Lambda with existing distributed optimization algorithm GIANT in serverful systems (AWS EC2). 
The results are plotted on synthetically generated data for logistic regression.
For serverful programming, we use Message Passing Interface (MPI) with one \texttt{c3.8xlarge} master and $60$ \texttt{t2.medium} workers in AWS EC2. In \cite{numpywren}, the authors observed that many large-scale linear algebra operations on serverless systems take at least $30\%$ more time compared to MPI-based computation on serverful systems. However, as shown in Fig. \ref{fig:mpi_vs_lambda}, we observe a slightly surprising trend that OverSketched Newton outperforms MPI-based optimization (that uses existing state-of-the-art optimization algorithm). This is because OverSketched Newton exploits the flexibility and massive scale at disposal in serverless, and thus produces a better approximation of the second-order update than GIANT.%
\footnote{We do not compare with exact Newton in serverful sytems since the data is large and stored in the cloud. Computing the exact Hessian would require a large number of workers (e.g., we use 10,000 workers for exact Newton in EPSILON dataset) which is infeasible in existing serverful systems.}
\section{Proofs}


To complete the proofs in this section, we will need the following lemma. 

\begin{lemma}\label{lemma1}
Let $\hh_t = \A_t^T\s_t\s_t^T\A_t$ where $\s_t$ is the sparse sketch matrix in \eqref{sketch_matrix} with sketch dimension $m = \Omega(d^{1+\mu}/\epsilon^2)$ and $N = \Theta_\mu(1/\epsilon)$. Then, the following holds 
\begin{align}
\lambda_{\min}(\hh_t) \geq (1-\epsilon) \lambda_{\min}(\nabla^2 f(\w_t)), \label{lambda_min}\\
\lambda_{\max}(\hh_t) \leq (1+\epsilon)\lambda_{\max}(\nabla^2 f(\w_t)) \label{lambda_max}
\end{align}
with probability at least $1-\frac{1}{d^\tau}$, where $\tau > 0$ is a constant depending on $\mu$ and the constants in $\Theta(\cdot)$ and $\Omega(\cdot)$, and $\lambda_{\max}(\cdot)$ and $\lambda_{\min}(\cdot)$ denote the maximum and minimum eigenvalues, respectively.
In general, 
\begin{align*}
\lambda_i(\nabla^2 f(\w_t)) -\epsilon\lambda_{\max}(\nabla^2 f(\w_t)) \leq \lambda_i(\hh_t) \leq \lambda_i(\nabla^2 f(\w_t)) +\epsilon\lambda_{\max}(\nabla^2 f(\w_t)),
\end{align*}
where $\lambda_i(\cdot)$ is the $i$-th eigenvalue.
\end{lemma}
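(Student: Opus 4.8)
The plan is to reduce the entire lemma to the $\ell_2$ subspace embedding property of the OverSketch matrix, and then to extract both eigenvalue statements by purely deterministic matrix arguments. First I would use that $\A_t$ is a square root of the Hessian, i.e. $\A_t^T\A_t = \nabla^2 f(\w_t)$, so that for every $\x \in \R^d$ we have the two identities $\x^T\nabla^2 f(\w_t)\x = \|\A_t\x\|_2^2$ and $\x^T\hh_t\x = \|\s_t^T\A_t\x\|_2^2$. Since $\A_t$ has only $d$ columns, its column space is at most $d$-dimensional, and the subspace embedding guarantee for the sketch in \eqref{sketch_matrix} --- established in \cite{oversketch} precisely for the regime $m = \Omega(d^{1+\mu}/\epsilon^2)$ and $N = \Theta_\mu(1/\epsilon)$ --- states that, with probability at least $1 - 1/d^\tau$,
\[
(1-\epsilon)\|\A_t\x\|_2^2 \le \|\s_t^T\A_t\x\|_2^2 \le (1+\epsilon)\|\A_t\x\|_2^2 \qquad \forall\, \x \in \R^d.
\]

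Combining this with the two identities immediately yields the positive-semidefinite sandwich $(1-\epsilon)\nabla^2 f(\w_t) \preceq \hh_t \preceq (1+\epsilon)\nabla^2 f(\w_t)$. From here the extreme-eigenvalue bounds \eqref{lambda_min} and \eqref{lambda_max} follow from the Courant--Fischer min-max characterization: minimizing the Rayleigh quotient $\x^T\hh_t\x/\x^T\x$ over $\x\neq 0$ and using the lower side of the sandwich gives $\lambda_{\min}(\hh_t) \ge (1-\epsilon)\lambda_{\min}(\nabla^2 f(\w_t))$, while maximizing it and using the upper side gives $\lambda_{\max}(\hh_t) \le (1+\epsilon)\lambda_{\max}(\nabla^2 f(\w_t))$.

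For the finer per-eigenvalue statement I would write $\hh_t = \nabla^2 f(\w_t) + \mathbf{E}$ with $\mathbf{E} = \hh_t - \nabla^2 f(\w_t)$. The sandwich gives $-\epsilon\,\nabla^2 f(\w_t) \preceq \mathbf{E} \preceq \epsilon\,\nabla^2 f(\w_t)$, and since the Hessian is positive semidefinite its spectral norm equals $\lambda_{\max}(\nabla^2 f(\w_t))$, so $\|\mathbf{E}\|_2 \le \epsilon\,\lambda_{\max}(\nabla^2 f(\w_t))$. Weyl's inequality for symmetric matrices then gives $|\lambda_i(\hh_t) - \lambda_i(\nabla^2 f(\w_t))| \le \|\mathbf{E}\|_2 \le \epsilon\,\lambda_{\max}(\nabla^2 f(\w_t))$ for every $i$, which is exactly the claimed two-sided bound on $\lambda_i(\hh_t)$.

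The only step carrying real content is the subspace embedding bound itself; everything downstream is standard min-max and matrix-perturbation (Weyl) reasoning. Thus the main obstacle --- and the source both of the failure probability $1/d^\tau$ and of the precise tradeoff between the total sketch dimension $m = \Omega(d^{1+\mu}/\epsilon^2)$ and the number of blocks $N = \Theta_\mu(1/\epsilon)$ --- is establishing that the concatenation of $N+e$ independent Count-Sketch blocks in \eqref{sketch_matrix} forms an $\epsilon$-subspace embedding for a $d$-dimensional subspace. Here I would lean on the analysis of \cite{oversketch}; the key point is that using $N = \Theta_\mu(1/\epsilon)$ independent blocks drives the embedding distortion below $\epsilon$ with total dimension $\Omega(d^{1+\mu}/\epsilon^2)$, rather than the $\Omega(d^2/\epsilon^2)$ a single Count-Sketch would require, while retaining straggler resilience since discarding any $e$ of the blocks still leaves a valid embedding.
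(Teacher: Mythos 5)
Your proposal is correct and follows essentially the same route as the paper: a subspace-embedding guarantee for the sketch yields the two-sided bound $(1-\epsilon)\x^T\nabla^2 f(\w_t)\x \leq \x^T\hh_t\x \leq (1+\epsilon)\x^T\nabla^2 f(\w_t)\x$, the extreme-eigenvalue claims follow from Rayleigh-quotient arguments, and the per-eigenvalue claim follows from Weyl's inequality applied to the decomposition $\hh_t = \nabla^2 f(\w_t) + \A_t^T(\s_t\s_t^T - \I)\A_t$. The only differences are cosmetic: the paper invokes Theorem 8 of \cite{nelson} and phrases the embedding (loosely) as holding for all $\x\in\R^n$ before specializing to $\x = \A_t\y$, whereas you state the embedding directly on the $d$-dimensional column space of $\A_t$ and bound $\|\hh_t - \nabla^2 f(\w_t)\|_2$ from the semidefinite sandwich, which is in fact the tighter way to justify the Weyl step.
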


\begin{proof}
We note than $N$ is the number of non-zero elements per row in the sketch $\s_t$ in \eqref{sketch_matrix} after ignoring stragglers.
We use Theorem 8 in \cite{nelson} to bound the singular values for the sparse sketch $\s_t$ in \eqref{sketch_matrix} with sketch dimension $m =  \Omega(d^{1+\mu}/\epsilon^2)$ and $N = \Theta(1/\epsilon)$. It says that  $\mathbb P (\forall ~\x ~\in~ \R^{n}, ||\s_t\x||_2 \in (1\pm \epsilon/3)||\x||_2) > 1 - 1/d^\tau$, where $\tau > 0$ depends on  $\mu$ and the constants in $\Theta(\cdot)$ and $\Omega(\cdot)$. Thus, $||\s_t\x||_2 \in (1\pm \epsilon/3)||\x||_2$, which implies that
\begin{align*}
 ||\s_t\x||_2^2 \in (1+ \epsilon^2/9 \pm 2\epsilon/3)||\x||_2^2,
\end{align*}
with probability at least $1 - 1/d^\tau$.
For $\epsilon \leq 1/2$, this leads to the following inequality
\begin{align}
||\s_t\x||_2^2 \in (1\pm \epsilon)||\x||_2^2 \Rightarrow |\x^T(\s_t\s_t^T - \I)\x| \leq \epsilon||\x||_2^2 ~\forall ~x\in \R^n \label{lambda_sst}
\end{align}
with probability at least $1 - 1/d^\tau$.
Also, since $(1-\epsilon)\x^T\x \leq \x^T\s_t\s_t^T\x \leq (1+\epsilon)\x^T\x ~\forall ~x\in \R^n$ by the inequality above, replacing $\x$ by $\A\y$, we get
\begin{align}\label{ineq}
(1-\epsilon)\y^T\A^T\A\y \leq \y^T\A^T\s_t\s_t^T\A\y \leq (1+\epsilon)\y^T\A^T\A\y 
\end{align}
with probability at least $1 - 1/d^\tau$. Let $\y_1$ be the unit norm eigenvector corresponding to the minimum eigenvalue of $\hh_t = \A^T_t\s_t\s_t^T\A_t$. Since the above inequality is true for all $\y$, we have
\begin{align*}
\y_1^T\A_t^T\s_t\s_t^T\A_t\y_1 \geq (1-\epsilon)\y_1^T\A_t^T\A_t\y_1 &\geq (1-\epsilon) \lambda_{\min} (\A^T_t\A_t) = (1-\epsilon)\lambda_{\min}(\nabla^2 f(\w_t)) \\
\Rightarrow \lambda_{\min}(\hh_t) &\geq (1-\epsilon)\lambda_{\min}(\nabla^2 f(\w_t))
\end{align*}
with probability at least $1 - 1/d^\tau$. Along similar lines, we can prove that $\lambda_{\max}(\hh_t) \leq (1-\epsilon)\lambda_{\max}(\nabla^2 f(\w_t))$ with probability at least $1 - 1/d^\tau$ using the right hand inequality in \eqref{ineq}. Together, these prove the first result. 

In general, Eq. \eqref{lambda_sst} implies that the eigenvalues of $(\s_t\s_t^T - \I)$ are in the set $[-\epsilon,\epsilon]$. Thus, all the eigenvalues of $\A^T_t(\s_t\s_t^T - \I)\A_t$ are in the set $[-\epsilon\lambda_{\max}(\nabla^2 f(\w_t)),\epsilon\lambda_{\max}(\nabla^2 f(\w_t))]$ Also, we can write 
\begin{align*}
 \hh_t = \A_t^T\s_t\s_t^T\A_t  = \A_t^T\A_t + \A_t^T(\s_t\s_t^T - \I)\A_t.
\end{align*}
Now, applying Weyl's inequality (see \cite{tao_weyl}, Section 1.3) on symmetric matrices $\hh_t = \A_t^T\s_t\s_t^T\A_t$, $\nabla^2 f(\w_t) = \A_t^T\A_t$ and $\A_t^T(\s_t\s_t^T - \I)\A_t$, we get
\begin{align*}
\lambda_i(\nabla^2 f(\w_t)) -\epsilon\lambda_{\max}(\nabla^2 f(\w_t)) \leq \lambda_i(\hh_t) \leq \lambda_i(\nabla^2 f(\w_t)) +\epsilon\lambda_{\max}(\nabla^2 f(\w_t)),
\end{align*}
which proves the second result.

\end{proof}



\subsection{Proof of Theorem \ref{global_thm}}\label{proof_global_conv}

Let's define $\w_{\tau} = \w_t + \tau\p_t$, where the descent direction $\p_t$ is given by $\p_t = -\hh_t^{-1}\nabla f(\w_t)$. Also, from Lemma \ref{lemma1}, we have 
$$\lambda_{\min}(\hh_t)\geq (1-\epsilon)\lambda_{\min}(\nabla^2f(\w_t)) \text{ and } \lambda_{\max}(\hh_t)\leq (1+\epsilon)\lambda_{\max}(\nabla^2f(\w_t)),$$
with probability at least $1 - 1/d^\tau.$ 
Using the above inequalities and the fact that $f(\cdot)$ is $k$-strongly convex and $M$-smooth, we get
\begin{align}\label{eq:eigs_hh}
(1-\epsilon)k\I\preceq \hh_t\preceq (1+\epsilon)M\I,
\end{align}
with probability at least $1 - 1/d^\tau.$ 

Next, we show that there exists an $\alpha>0$ such that the Armijo line search condition in \eqref{eq:line-search} is satisfied.  From the smoothness of $f(\cdot)$, we get (see \cite{nesterov_book}, Theorem 2.1.5) 
\begin{align*}
f(\w_{\alpha}) - f(\w_t) &\leq (\w_\alpha - \w_t)^T\nabla f(\w_t) + \frac{M}{2}||\w_\alpha - \w_t||^2,\\
&=\alpha\p_t^T\nabla f(\w_t) + \alpha^2\frac{M}{2}||\p_t||^2.
\end{align*}
Now, for $\w_{\alpha}$ to satisfy the Armijo rule, $\alpha$ should satisfy
\begin{align*}
\alpha\p_t^T\nabla f(\w_t) + \alpha^2\frac{M}{2}||\p_t||^2 &\leq \alpha\beta\p_t^T\nabla f(\w_t)\\
\Rightarrow \alpha\frac{M}{2}||\p_t||^2 &\leq (\beta - 1)\p_t^T\nabla f(\w_t)\\
\Rightarrow \alpha\frac{M}{2}||\p_t||^2 &\leq (1-\beta)\p_t^T\hh_t\p_t,
\end{align*} 
where the last inequality follows from the definition of $\p_t$.
Now, using the lower bound from \eqref{eq:eigs_hh}, $\w_\alpha$ satisfies Armijo rule for all
$$\alpha \leq \frac{2(1-\beta)(1-\epsilon)k}{M}.$$
Hence, we can always find an $\alpha_t\geq \frac{2(1-\beta)(1-\epsilon)k}{M}$ using backtracking line search such that $\w_{t+1}$ satisfies the Armijo condition, that is,
\begin{align}
f(\w_{t+1}) - f(\w_t) &\leq \alpha_t\beta\p_t^T\nabla f(\w_t)\nn\\
&= -\alpha_t\beta\nabla f(\w_t)^T\hh_t^{-1} \nabla f(\w_t)\nn\\
&\leq -\frac{\alpha_t\beta}{\lambda_{\max} (\hh_t)}||\nabla f(\w_t)||^2\nn
\end{align}
which in turn implies
\begin{align}\label{eq:armijo}
f(\w_{t}) - f(\w_{t+1}) \geq \frac{\alpha_t\beta}{M(1+\epsilon)}||\nabla f(\w_t)||^2
\end{align}
with probability at least $1 - 1/d^\tau$. Here the last inequality follows from the bound in \eqref{eq:eigs_hh}. Moreover, $k$-strong convexity of $f(\cdot)$ implies (see \cite{nesterov_book}, Theorem 2.1.10)
\begin{align*}
f(\w_t) - f(\w^*)\leq \frac{1}{2k}||\nabla f(\w_t)||^2.
\end{align*}
Using the inequality from \eqref{eq:armijo} in the above inequality, we get
\begin{align*}
f(\w_{t}) - f(\w_{t+1}) &\geq \frac{2\alpha_t\beta k}{M(1+ \epsilon)} (f(\w_t) - f(\w^*))\\
&\geq \rho (f(\w_t) - f(\w^*)),
\end{align*}
where $\rho = \frac{2\alpha_t\beta k}{M(1+ \epsilon)}$. Rearranging, we get 
\begin{align*}
f(\w_{t+1}) - f(\w^*) \leq (1-\rho)(f(\w_t) - f(\w^*))
\end{align*}
with probability at least $1 - 1/d^\tau$, which proves the desired result.

\subsection{Proof of Theorem \ref{convergence_thm}}\label{proof_local_conv}

According to OverSketched Newton update, $\w_{t+1}$ is obtained by solving
\begin{align*}
\w_{t+1} = \arg\min_{\w\in \R^d} \Big\{f(\w_t) + \nabla f(\w_t)^T(\w - \w_t) 
+ \frac{1}{2}(\w - \w_t)^T\hh_t(\w - \w_t)\Big\}.
\end{align*}
Thus, we have, for any $\w\in \R^d$,
\begin{align*}
& f(\w_t) + \nabla f(\w_t)^T(\w - \w_t) + \frac{1}{2}(\w - \w_t)^T\hh_t(\w - \w_t), \\
&~~~~~~~\geq f(\w_t) + \nabla f(\w_t)^T(\w_{t+1} - \w_t) + \frac{1}{2}(\w_{t+1} - \w_t)^T\hh(\w_{t+1} - \w_t), \\
&\Rightarrow \nabla f(\w_t)^T(\w - \w_{t+1}) + \frac{1}{2}(\w - \w_t)^T\hh_t(\w - \w_t) - \frac{1}{2}(\w_{t+1} - \w_t)^T\hh_t(\w_{t+1} - \w_t)  \geq 0,\\
&\Rightarrow \nabla f(\w_t)^T(\w - \w_{t+1}) + \frac{1}{2}\Big[(\w - \w_t)^T\hh_t(\w - \w_{t+1}) + (\w - \w_{t+1})^T\hh_t(\w_{t+1} - \w_t)\Big]  \geq 0.
\end{align*}
Substituting $\w$ by $\w^*$ in the above expression and calling $\Delta_t = \w^* - \w_{t}$, we get
\begin{align*}
& -\nabla f(\w_t)^T\Delta_{t+1} + \frac{1}{2}\Big[\Delta_{t+1}^T\hh_t(2\Delta_t - \Delta_{t+1})\Big]  \geq 0,\\
&\Rightarrow \Delta_{t+1}^T\hh_t\Delta_t -\nabla f(\w_t)^T\Delta_{t+1} \geq \frac{1}{2}\Delta_{t+1}^T\hh_t \Delta_{t+1}.
\end{align*}
Now, due to the optimality of $\w^*$, we have $\nabla f(\w^*)^T\Delta_{t+1}\geq 0$. Hence, we can write
\begin{align*}
&\Delta_{t+1}^T\hh_t\Delta_t -(\nabla f(\w_t) - \nabla f(\w^*))^T\Delta_{t+1} \geq \frac{1}{2}\Delta_{t+1}^T\hh_t \Delta_{t+1}.
\end{align*}
Next, substituting $\nabla f(\w_t) - \nabla f(\w^*) = \Big( \int_0^1\nabla^2 f(\w^* + p(\w_t - \w^*))dp\Big)(\w_t - \w^*)$ in the above inequality, we get 
\begin{align*}
\Delta_{t+1}^T(\hh_t -  \nabla^2 f(\w_t)) \Delta_t + \Delta_{t+1}^T\Big(\nabla^2 f(\w_t) - \int_0^1\nabla^2 f(\w^* + p(\w_t - \w^*))dp\Big)\Delta_t  \geq \frac{1}{2}\Delta_{t+1}^T\hh_t \Delta_{t+1}.
\end{align*}
Using Cauchy-Schwartz inequality in the LHS above, we get
\begin{align*}
||\Delta_{t+1}||_2 ||\Delta_t||_2 \Big(||\hh_t -  \nabla^2 f(\w_t)||_{2}+  \int_0^1||\nabla^2 f(\w_t) -\nabla^2 f(\w^* + p(\w_t - \w^*))||_{2}dp\Big)  \geq \frac{1}{2}\Delta_{t+1}^T\hh_t \Delta_{t+1}.
\end{align*}
Now, using the $L$-Lipschitzness of $\nabla^2f(\cdot)$ in the inequality above, we get
\begin{align}\label{ineq11}
\frac{1}{2}\Delta_{t+1}^T\hh_t \Delta_{t+1} &\leq ||\Delta_{t+1}||_2 ||\Delta_t||_2||\hh_t -  \nabla^2 f(\w_t)||_{2} + \frac{L}{2}||\Delta_{t+1}||_2 ||\Delta_t||_2^2 \int_0^1(1-p)dp, \nn\\
\Rightarrow \frac{1}{2}\Delta_{t+1}^T\hh_t \Delta_{t+1} &\leq ||\Delta_{t+1}||_2\Big(||\Delta_t||_2||\hh_t -  \nabla^2 f(\w_t)||_{2} + \frac{L}{2} ||\Delta_t||_2^2\Big).
\end{align}

Note that for the positive definite matrix $\nabla^2 f(\w_t) = \A_t^T\A_t$, we have $||\A_t||^2_{2} = ||\nabla^2 f(\w_t)||_{2}$. Moreover,
\begin{align*}
||\hh_t -  \nabla^2 f(\w_t)||_{2} = ||\A_t^T(\s_t\s_t^T - \I)\A_t||_{2} \leq ||\A_t||^2_{2}||\s_t\s_t^T - \I||_{2}
\end{align*} 
Now, using Equation \ref{lambda_sst} from the proof of Lemma \ref{lemma1}, we get $||\s_t\s_t^T - \I||_{2} = \lambda_{\max}(\s_t\s_t^T - \I) \leq \epsilon$. Using this to bound the RHS of \eqref{ineq11}, we have, with probability at least $1 - 1/d^\tau$,
\begin{align*}
 \frac{1}{2}\Delta_{t+1}^T\hh_t \Delta_{t+1} &\leq ||\Delta_{t+1}||_2\Big( \epsilon||\nabla^2 f(\w_t)||_{2}||\Delta_t||_2+ \frac{L}{2} ||\Delta_t||_2^2\Big)\\
 \frac{1}{2} ||\s_t\A\Delta_{t+1}||_2^2 &\leq ||\Delta_{t+1}||_2\Big(\epsilon||\nabla^2 f(\w_t)||_{2}||\Delta_t||_2+ \frac{L}{2} ||\Delta_t||_2^2\Big),
 \end{align*} 
where the last inequality follows from $\hh_t = \A_t^T\s_t^T\s_t\A_t$. Now, since the sketch dimension $m = \Omega(d^{1+\mu}/\epsilon^2)$,  
using Eq. \eqref{lambda_sst} from the proof of Lemma 1 in above inequality, we get, with probability at least $1 - 1/d^\tau$,
\begin{align*}
 \frac{1}{2}(1-\epsilon) ||\A\Delta_{t+1}||_2^2 &\leq ||\Delta_{t+1}||_2\Big(\epsilon||\nabla^2 f(\w_t)||_{2}||\Delta_t||_2+ \frac{L}{2} ||\Delta_t||_2^2\Big),\\
 \Rightarrow \:\frac{1}{2}(1-\epsilon) \Delta_{t+1}^T\nabla^2 f(\w_t)\Delta_{t+1} &\leq ||\Delta_{t+1}||_2\Big(\epsilon||\nabla^2 f(\w_t)||_{2}||\Delta_t||_2+ \frac{L}{2} ||\Delta_t||_2^2\Big).
 \end{align*}

\noindent Now, since $\gamma$ and $\beta$ are the minimum and maximum eigenvalues of  $\nabla^2 f(\w^*)$, we get
 \begin{align*}
 \frac{1}{2}(1-\epsilon) ||\Delta_{t+1}||_2(\gamma - L||\Delta_{t}||_2) &\leq\epsilon(\beta + L||\Delta_t||_2)||\Delta_t||_2+ \frac{L}{2} ||\Delta_t||_2^2
 \end{align*} 
 by the Lipschitzness of $\nabla^2 f(\w)$, that is, $|\Delta_{t+1}^T(\nabla^2 f(\w_t) - \nabla^2 f(\w^*))\Delta_{t+1}| \leq L||\Delta_t||_2||\Delta||_{t+1}^2$. Rearranging for $\epsilon\leq \gamma/(8\beta) < 1/2$, we get
 \begin{align}\label{recursion}
 ||\Delta_{t+1}||_2 \leq \frac{4\epsilon\beta}{\gamma - L||\Delta_t||_2}||\Delta_t||_2 + \frac{5L}{2(\gamma - L||\Delta_t||_2)} ||\Delta_t||_2^2,
 \end{align}
 with probability at least $1 - 1/d^\tau$.

 Let $\xi_T$ be the event that the above inequality (in \eqref{recursion}) is true for $t = 0,1,\cdots, T$. 
Thus, $$\mathbb{P}(\xi_T) \geq \Big(1 - \frac{1}{d^\tau}\Big)^T \geq 1 - \frac{T}{d^\tau},$$
where the second inequality follows from Bernoulli's inequality.
 Next, assuming that the event $\xi_T$ holds, we prove that $||\Delta_t||_2 \leq \gamma/5L$ using induction.
 We can verify the base case using the initialization condition, i.e. $||\Delta_0||_2 \leq \gamma/8L$. Now, assuming that $||\Delta_{t-1}||_2 \leq \gamma/5L$ and using it in the inequality \eqref{recursion}, we get
 \begin{align*}
  ||\Delta_{t}||_2 &\leq \frac{4\epsilon\beta}{\gamma}\times\frac{\gamma}{5L} + \frac{5L}{2\gamma}\times\frac{\gamma^2}{25L^2} \\
  &= \frac{4\epsilon\beta}{5L} + \frac{\gamma}{10L}\\
  &\leq \frac{\gamma}{L}\bigg(\frac{1}{10} + \frac{1}{10}\bigg) \leq \frac{\gamma}{5L}, 
 \end{align*}
where the last inequality uses the fact that $\epsilon \leq \gamma/(8\beta)$. Thus, by induction, 
$$ ||\Delta_t||_2 \leq \gamma/(5L) ~~\forall~~ t\geq 0 ~~~~~\text{with probability at least}~ 1 - {T}/{d^\tau}.$$  
Using this in \eqref{recursion}, we get the desired result, that is,
\begin{align*}
 ||\Delta_{t+1}||_2 \leq \frac{5\epsilon\beta}{\gamma}||\Delta_t||_2 + \frac{25L}{8\gamma} ||\Delta_t||_2^2,
\end{align*}
with probability at least $1 - T/d^\tau$.

\subsection{Proof of Theorem \ref{global_thm_weakly_convex}}\label{proof_global_conv_weakly_convex}
Let us define a few short notations for convenience. Say $\g_t = \nabla f(\w_t)$ and $\h_t = \nabla^2f(\w_t) = \A_t^T\A_t$, and we know that $\hh_t = \A^T_t\s_t\s^T_t\A_t$. Moreover, all the results with approximate Hessian $\hh_t$ hold with probability $1 - 1/d^\tau$. We skip its mention in most of the proof for brevity. The following lemmas will assist us in the proof.
\begin{lemma}\label{lemma:smoothness_for_g2}
$M$-smoothness of $f(\cdot)$ and $L$-Lipchitzness of $\nabla^2 f(\cdot)$ imply 
\begin{align}
\left\|\nabla^{2} f(\mathbf{y}) \nabla f(\mathbf{y})-\nabla^{2} f(\mathbf{x}) \nabla f(\mathbf{x})\right\| \leq Q||\y - \x||
\end{align}
for all $\x\in \R^d$, $Q = (L\delta + M^2)$, where 
$\y \in \mathcal{Y}$, where $\mathcal{Y} = \{\y\in \R^d|~||\nabla f(\mathbf{y})|| \leq \delta\}$ and $\delta > 0$ is some constant. 
\end{lemma}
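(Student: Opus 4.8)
The plan is to show that the vector field $\x\mapsto\nabla^2 f(\x)\nabla f(\x)$ (which is, up to a factor of two, the gradient of the Newton-MR objective $\|\nabla f(\cdot)\|^2$) is Lipschitz in the second argument when it is restricted to the sublevel set $\mathcal{Y}$. The natural tool is a single add-and-subtract step that decouples the variation of the Hessian from the variation of the gradient. First I would insert the cross term $\nabla^2 f(\x)\nabla f(\y)$ to write
\begin{align*}
\nabla^2 f(\y)\nabla f(\y) - \nabla^2 f(\x)\nabla f(\x) = \big(\nabla^2 f(\y) - \nabla^2 f(\x)\big)\nabla f(\y) + \nabla^2 f(\x)\big(\nabla f(\y) - \nabla f(\x)\big),
\end{align*}
and then apply the triangle inequality together with submultiplicativity of the spectral norm.

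The key observation---and the only point that requires any care---is the direction of the split. By inserting $\nabla^2 f(\x)\nabla f(\y)$ (rather than $\nabla^2 f(\y)\nabla f(\x)$), the Hessian increment ends up multiplying $\nabla f(\y)$, whose norm is controlled by $\delta$ precisely because $\y\in\mathcal{Y}$; the opposite split would instead require a bound on $\|\nabla f(\x)\|$, which is not assumed. With this split in hand, each of the four factors is bounded by exactly one hypothesis: $\|\nabla^2 f(\y) - \nabla^2 f(\x)\|\le L\|\y-\x\|$ by the $L$-Lipschitzness of the Hessian (Assumption~4); $\|\nabla f(\y)\|\le\delta$ by the definition of $\mathcal{Y}$; $\|\nabla^2 f(\x)\|\le M$ by $M$-smoothness (Assumption~3); and $\|\nabla f(\y) - \nabla f(\x)\|\le M\|\y-\x\|$, again by $M$-smoothness, since $M$-smoothness of $f$ is equivalent to $M$-Lipschitzness of $\nabla f$.

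Combining these four bounds yields
\begin{align*}
\|\nabla^2 f(\y)\nabla f(\y) - \nabla^2 f(\x)\nabla f(\x)\| \le L\delta\,\|\y-\x\| + M^2\,\|\y-\x\| = (L\delta + M^2)\|\y-\x\|,
\end{align*}
which is precisely the claim with $Q = L\delta + M^2$. I do not expect a genuine obstacle here: the estimate is a routine telescoping argument, and the only substantive decision is choosing the split so that the sublevel-set bound $\|\nabla f(\y)\|\le\delta$ is the factor multiplying the Lipschitz Hessian increment, rather than a quantity for which no a priori bound is available.
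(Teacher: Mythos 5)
Your proposal is correct and matches the paper's own proof essentially verbatim: the same add-and-subtract decomposition placing the Hessian increment against $\nabla f(\y)$, followed by the triangle inequality, submultiplicativity, and the bounds $L\|\y-\x\|$, $\delta$, $M$, and $M\|\y-\x\|$ on the four factors. Your remark about the direction of the split is exactly the right point of care, and the paper's argument relies on it in the same way.
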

\begin{proof}
We have
\begin{align*}
LHS &= \left\|\nabla^{2} f(\mathbf{y}) \nabla f(\mathbf{y})-\nabla^{2} f(\mathbf{x}) \nabla f(\mathbf{x})\right\| \\
&= \left\|\nabla^{2} f(\y) - \nabla^2 f(\mathbf{x}))\nabla f(\y) + \nabla^2 f(\mathbf{x})(\nabla f(\mathbf{y}) - \nabla f(\mathbf{x})) \right\| 
\end{align*}
By applying triangle inequality and Cauchy-Schwarz to above equation, we get
\begin{align*} 
LHS \leq ||\nabla^{2} f(\y) - \nabla^2 f(\mathbf{x})||_2||\nabla f(\y)|| + ||\nabla^2 f(\mathbf{x})||_2||\nabla f(\mathbf{y}) - \nabla f(\mathbf{x})||
\end{align*}
From the smoothness of $f(\cdot)$, that is, Lipshitzness of gradient, we get $||\nabla^2 f(\mathbf{x})||_2 \leq M ~\forall ~x\in\R^d$. Additionally, using Lipshitzness of Hessian, we get
\begin{align*}
LHS &\leq (L||\nabla f(\y)|| + M^2) ||\y - \x||\\
&\leq (L\delta + M^2) ||\y - \x||
\end{align*}
for $\y\in \mathcal{Y}$. This proves the desired result.
\end{proof}

\begin{lemma}\label{lemma:Utg_lower_bound}
  Let $\A^T = \U\sqrt{\Sigma}\V^T$ and $\A^T\s_t = \hU\sqrt{\hat\Sigma}\hat{\V}^T$ be the truncated Singular Value Decompositions (SVD) of $\A^T$ and $\A^T\s_t$, respectively. Thus, $\h_t = \U\Sigma\U^T$ and $\hh_t = \hat{\U}\hat{\Sigma}\hat{\U}^T$. Then, for all $\g \in \R^d$, we have
  \begin{align}
    ||\hU^T\g||^2 \geq \frac{(1-\epsilon)\eta}{M(1+\epsilon)}||\U^T\g||^2,
  \end{align}
  where $\eta$ is defined in Assumption (5).
\end{lemma}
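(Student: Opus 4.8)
The plan is to bound $\|\hU^T\g\|^2$ from below through a chain of three inequalities that pass through the two Rayleigh quotients $\g^T\hh_t\g$ and $\g^T\h_t\g$. The idea is to convert the projection-type quantity associated with the sketched Hessian into the corresponding quantity for the exact Hessian, with the subspace-embedding guarantee of Lemma~\ref{lemma1} serving as the bridge between the two.

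First, I would exploit the spectral decomposition $\hh_t = \hU\hat\Sigma\hU^T$. Since $\hat\Sigma$ collects the nonzero eigenvalues of $\hh_t$, writing the Rayleigh quotient in the $\hU$-coordinates gives $\g^T\hh_t\g = (\hU^T\g)^T\hat\Sigma(\hU^T\g) \le \lambda_{\max}(\hh_t)\,\|\hU^T\g\|^2$, hence $\|\hU^T\g\|^2 \ge \g^T\hh_t\g / \lambda_{\max}(\hh_t)$. This is the step that trades the squared projection norm for a quadratic form, which is precisely what lets me invoke the subspace embedding next.

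Second, I would apply the subspace embedding. The left inequality of \eqref{ineq} (with $\A=\A_t$ and the free vector set to $\g$) gives $\g^T\hh_t\g \ge (1-\epsilon)\,\g^T\h_t\g$, while Lemma~\ref{lemma1}, via \eqref{lambda_max}, together with $M$-smoothness yields $\lambda_{\max}(\hh_t) \le (1+\epsilon)\lambda_{\max}(\h_t) \le (1+\epsilon)M$. Substituting both into the previous display produces $\|\hU^T\g\|^2 \ge \frac{(1-\epsilon)}{(1+\epsilon)M}\,\g^T\h_t\g$. Finally, I would lower bound the exact-Hessian Rayleigh quotient using $\h_t = \U\Sigma\U^T$: in $\U$-coordinates, $\g^T\h_t\g = (\U^T\g)^T\Sigma(\U^T\g) \ge \eta\,\|\U^T\g\|^2$, where the last step uses Assumption~5, which states exactly that the minimum nonzero eigenvalue of $\nabla^2 f(\w)$ is at least $\eta$ (equivalently $\|(\nabla^2 f)^\dagger\|_2 \le 1/\eta$). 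Chaining the three bounds yields $\|\hU^T\g\|^2 \ge \frac{(1-\epsilon)\eta}{M(1+\epsilon)}\|\U^T\g\|^2$, which is the claim.

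The main thing to watch—rather than a deep obstacle—is that, because $f$ is only weakly convex here, the Hessian may be rank-deficient, so $\Sigma$ and $\hat\Sigma$ contain \emph{only} the nonzero eigenvalues. I must therefore lower bound $\g^T\h_t\g$ using the minimum \emph{nonzero} eigenvalue (the quantity controlled by Assumption~5), not $\lambda_{\min}(\h_t)$, and keep the roles of $\lambda_{\max}$ and $\eta$ carefully separated throughout. I would also note that the subspace embedding (and hence each inequality above) holds on the high-probability event of Lemma~\ref{lemma1}, so the conclusion inherits the $1-1/d^\tau$ guarantee, which is consistent with the blanket probabilistic convention stated at the top of the proof of Theorem~\ref{global_thm_weakly_convex}.
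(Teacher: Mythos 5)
Your proof is correct and follows essentially the same route as the paper: since $\g^T\hh_t\g=\|\s_t^T\A_t\g\|^2$ and $\g^T\h_t\g=\|\A_t\g\|^2$, your three-step chain (Rayleigh-quotient bound via $\hat\Sigma$, subspace embedding from Lemma~\ref{lemma1}, and the lower bound $\g^T\h_t\g\geq\eta\|\U^T\g\|^2$ from Assumption~5) is exactly the paper's combination of its inequalities \eqref{ineq1}, \eqref{ineq2}, and the embedding $\|\s_t^T\A\g\|^2\geq(1-\epsilon)\|\A\g\|^2$, merely reorganized. Your care about using the minimum \emph{nonzero} eigenvalue with the truncated SVDs matches the paper's treatment as well.
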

\begin{proof}
For all $\g\in \R^d$, using the fact that $\A = \bf{V}\sqrt{\Sigma}\U^T$, we get
\begin{align}\label{ineq1}
 ||\A\g||^2 &= (\U^T\g)^T\Sigma(\U^T\g)\nn\\
&\geq \lambda_{\min}(\Sigma)||\U^T\g||^2\nn\\
&\geq \eta||\U^T\g||^2,
\end{align}
where the last inequality uses Assumption (5). In a similar fashion, we can obtain
\begin{align}\label{ineq2}
 ||\s^T_t\A\g||^2 &= (\hU^T\g)^T\hat{\Sigma}(\hU^T\g)\nn\\
&\leq \lambda_{\max}(\hat\Sigma)||\hU^T\g||^2\nn \\
&\leq M(1+\epsilon)||\hU^T\g||^2,
\end{align}
where the last inequality uses $M$-smoothness of $f(\cdot)$ and Lemma \ref{lemma1}. Also, from the subspace embedding property of $\s_t$ (see Lemma \ref{lemma1}), we have
\begin{align*}
  ||\s^T\A\g||^2 \geq (1-\epsilon)||\A\g||^2.
\end{align*}
Now, using the above inequality and Eqs. \eqref{ineq1} and \eqref{ineq2}, we get
\begin{align}
    ||\hU^T\g||^2 \geq \frac{(1-\epsilon)\eta}{M(1+\epsilon)}||\U^T\g||^2,
  \end{align}
which is the desired result.
\end{proof}

Now we are ready to prove Theorem \ref{global_thm_weakly_convex}. Let $\h_t = \U\Sigma\U^T$ and $\hh_t = \hat{\U}\hat{\Sigma}\hat{\U}^T$ be the truncated SVDs of $\h_t$ and $\hh_t$, respectively. Also, let $\alpha_t$ be the step-size obtained using line-search in \eqref{eq:line-search2} in the $t$-th iteration. Thus, Eq. \eqref{eq:line-search2} with the update direction $\p_t = -\ddh_t\g_t$ implies
\begin{align}\label{line-search-bound}
 ||\g_{t+1}||^2 &\leq ||\g_t||^2 - 2\beta\alpha_t\langle \hh_t\g_t, \ddh_t\g_t\rangle \nn\\
&= ||\g_t||^2 - 2\beta\alpha_t||\hU_T^T\g_t||^2,
\end{align}
where the last equality uses the fact that $\ddh_t$ can be expressed as  $\ddh_t = \hat{\U}\hat{\Sigma}^{-1}\hat{\U}^T$. Note that Lemma \ref{lemma:smoothness_for_g2} implies that the function $||\nabla f(\y)||^2/2$ is smooth for all $y \in \mathcal{Y}$, where $\mathcal{Y} = \{\y\in \R^d|~||\nabla f(\mathbf{y})|| \leq \delta\}$. Smoothness in turn implies the following property (see \cite{nesterov_book}, Theorem 2.1.10)
\begin{align}\label{smoothness_final}
  \frac{1}{2}||\nabla f(\mathbf{y})||^2 \leq \frac{1}{2}||\nabla f(\mathbf{x})||^2 + \langle \nabla^2 f(\x) ,\y-\x\rangle + \frac{1}{2}Q||\y -\x||^2 ~\forall ~\x, \y\in \mathcal(Y),
\end{align} 
where $Q = L\delta + M^2$. We take $\delta = ||\nabla f(\w_0)||$ where $\w_0$ is the initial point of our algorithm. Due to line-search condition in \eqref{eq:line-search2}, it holds that $||\nabla f(\w_t)|| \leq ||\nabla f(\w_0)||~\forall~ t>0$. Thus, substituting $\x = \w_t$ and $\y = \w_{t+1} = \w_t + \alpha_t\p_t$, we get
\begin{align}\label{ineq_smooth1}
 \frac{1}{2}||\g_{t+1}||^2 &\leq \frac{1}{2}||\g_t||^2 + \langle \h_t\g_t ,\alpha_t\p_t\rangle + \frac{1}{2}Q\alpha^2||\p_t||^2\nn\\
\Rightarrow ||\g_{t+1}||^2 &\leq ||\g_t||^2 + \langle 2\h_t\g_t ,\alpha_t\p_t\rangle + Q\alpha^2||\p_t||^2,
\end{align} 
where 
$$Q = L||\nabla f(\w_0)|| + M^2.$$
Also, since the minimum non-zero eigenvalue of $\h_t \geq\eta$ from Assumption (5), the minimum non-zero eigenvalue of $\hh_t$ is at least $\eta - \epsilon M$ from Lemma \ref{lemma1}. Thus, 
\begin{equation}\label{upper_bound_hhd}
  |\ddh_t||_2 \leq 1/(\eta - \epsilon M).
\end{equation}
|Moreover, 
\begin{equation}\label{norm_pt}
||\p_t|| = ||-\ddh_t\g_t|| \leq ||\ddh_t||_2||\g_t|| \leq \frac{||\g_t||}{(\eta - \epsilon M)}.
\end{equation}
Using this in \eqref{ineq_smooth1}, we get
\begin{align}\label{ineq_smooth2}
||\g_{t+1}||^2 &\leq ||\g_t||^2 - 2\alpha_t\langle \h_t\g_t ,\ddh_t\g_t\rangle + Q\alpha^2\frac{||\g_t||^2}{(\eta - \epsilon M)^2}.
\end{align}
Now, 
\begin{align*}
  -\langle \h_t\g_t ,\ddh_t\g_t\rangle &= -\langle \hh_t\g_t ,\ddh_t\g_t\rangle + \langle (\hh_t - \h_t)\g_t ,\ddh_t\g_t\rangle\\
  \Rightarrow-\langle \h_t\g_t ,\ddh_t\g_t\rangle &\leq -||\hU_t^T\g_t||^2 + ||\g_t||^2||\hh_t - \h_t||_2 ||\ddh_t||_2,
\end{align*} 
where the last inequality is obtained by applying the triangle inequality and Cauchy-Schwartz inequality. This can be further simplified using Lemma \ref{lemma1} and Eq. \eqref{upper_bound_hhd} as{}
\begin{align*}
-\langle \h_t\g_t ,\ddh_t\g_t\rangle &\leq -||\hU_t^T\g_t||^2 +  \frac{\epsilon M}{(\eta - M\epsilon)}||\g_t||^2
\end{align*}
Using the above in Eq. \eqref{ineq_smooth2}, we get
\begin{align}\label{ineq_smooth3}
||\g_{t+1}||^2 &\leq ||\g_t||^2 + 2\alpha_t(-||\hU_t^T\g_t||^2 +  \frac{\epsilon M}{(\eta - M\epsilon)}||\g_t||^2) + Q\alpha_t^2\frac{||\g_t||^2}{(\eta - \epsilon M)^2}  
\end{align}
Note that the upper bound in Eq. \eqref{ineq_smooth3} always holds. Also, we want the inequality in \eqref{line-search-bound} to hold for some $\alpha_t > 0$. 
Therefore, we want $\alpha_t$ to satisfy the following (and hope that it is always satisfied for some $\alpha_t>0$)
\begin{align}\label{ineq_smooth4}
||\g_t||^2 + 2\alpha_t(-||\hU_t^T\g_t||^2 &+  \frac{\epsilon M}{(\eta - M\epsilon)}||\g_t||^2) + Q\alpha_t^2\frac{||\g_t||^2}{(\eta - \epsilon M)^2} \leq ||\g_t||^2 - 2\beta\alpha_t||\hU_T^T\g_t||^2 \nn\\
\Rightarrow Q\alpha_t^2\frac{||\g_t||^2}{(\eta - \epsilon M)^2} &\leq 2\alpha_t\bigg[(1 - \beta)||\hU_t^T\g_t||^2 - \frac{\epsilon M}{(\eta - M\epsilon)}||\g_t||^2\bigg]\nn\\
\Rightarrow \alpha_t &\leq \frac{2(\eta - \epsilon M)^2}{Q}\bigg[(1-\beta)\frac{||\hU_t^T\g_t||^2}{||\g_t||^2} - \frac{\epsilon M}{(\eta - M\epsilon)}\bigg].
\end{align}
Thus, any $\alpha_t$ satisfying the above inequality would definitely satisfy the line-search termination condition in 

Now, using Lemma \ref{lemma:Utg_lower_bound} and Assumption (6), we have
\begin{equation}\label{bound_uthg}
||\hU_t^T\g_t||^2 \geq \frac{(1-\epsilon)\eta}{M(1+\epsilon)}||\U^T_t\g||^2 \geq \frac{(1-\epsilon)\eta}{M(1+\epsilon)}\nu||\g||^2.
\end{equation}
Using the above in Eq. \eqref{ineq_smooth4} to find an iteration independent bound on $\alpha_t$, we get
\begin{align}\label{ineq_smooth5}
\alpha_t &\leq \frac{2(\eta - \epsilon M)^2}{Q}\bigg[(1-\beta)\nu - \frac{\epsilon M}{(\eta - M\epsilon)}\bigg].
\end{align}
Hence, line-search will always terminate for all $\alpha_t$ that satisfy the above inequality. This can be further simplified by assuming that $\epsilon$ is small enough such that $\epsilon < \eta/2M$. Thus, $\eta - M\epsilon > \eta/2$, and the sufficient condition on $\alpha_t$ in \eqref{ineq_smooth5} becomes
\begin{align}\label{ineq_smooth6}
\alpha_t &\leq \frac{\eta}{2Q}\big[(1-\beta)\nu\eta - 2\epsilon M\big]. 
\end{align}
For a positive $\alpha_t$ to always exist, we require $\epsilon$ to further satisfy
\begin{align}
  \epsilon \leq \frac{(1-\beta)\nu\eta}{2M},
\end{align}
which is tighter than the initial upper bound on $\epsilon$. Now, Eqs. \eqref{line-search-bound} and \eqref{bound_uthg} proves the desired result, that is
\begin{align*}
||\g_{t+1}||^2 \leq ||\g_t||^2 - 2\beta\alpha_t||\hU_T^T\g_t||^2 \leq \bigg(1 - 2\beta\alpha_t\nu\frac{(1-\epsilon)\eta}{M(1+\epsilon)}\bigg)||\g_t||^2.
\end{align*}
Thus, OverSketched Newton for the weakly-convex case enjoys a uniform linear convergence rate of decrease in $||\nabla f(\w)||^2$.


\section{Conclusions}

We proposed OverSketched Newton, a straggler-resilient distributed optimization algorithm for serverless systems. It uses the idea of matrix sketching from RandNLA to find an approximate second-order update in each iteration. We proved that OverSketched Newton has a local linear-quadratic convergence rate for the strongly-convex case, where the dependence on the linear term can be made to diminish by increasing the sketch dimension. Moreover, it has a linear global convergence rate for weakly-convex functions.
By exploiting the massive scalability of serverless systems, OverSketched Newton produces a global approximation of the second-order update. Empirically, this translates into faster convergence than state-of-the-art distributed optimization algorithms on AWS Lambda.

\section*{Acknowledgments}
This work was partially supported by NSF grants  CCF-1748585 and CNS-1748692 to SK, and NSF grants CCF-1704967 and CCF- 0939370 (Center for Science of Information) to TC,  and ARO, DARPA, NSF, and ONR grants to MWM, and  NSF Grant CCF-1703678 to KR. The authors would like to additionally thank Fred-Roosta and Yang Liu for helpful discussions regarding our proof techniques and AWS for providing promotional cloud credits for research. 
 
\bibliographystyle{ieeetr}
 \bibliography{bibli}

\end{document}